\newtheorem{myDef}{Definition}
\newtheorem{theorem}{Theorem}
\newtheorem{lemma}{Lemma}
\algnewcommand\Input{\item[\textbf{Input:}]}
\algnewcommand\Output{\item[\textbf{Output:}]}
\begin{document}
%
% paper title
% Titles are generally capitalized except for words such as a, an, and, as,
% at, but, by, for, in, nor, of, on, or, the, to and up, which are usually
% not capitalized unless they are the first or last word of the title.
% Linebreaks \\ can be used within to get better formatting as desired.
% Do not put math or special symbols in the title.
\title{Continuous Influence-based Community Partition for Social Networks}
%
%
% author names and IEEE memberships
% note positions of commas and nonbreaking spaces ( ~ ) LaTeX will not break
% a structure at a ~ so this keeps an author's name from being broken across
% two lines.
% use \thanks{} to gain access to the first footnote area
% a separate \thanks must be used for each paragraph as LaTeX2e's \thanks
% was not built to handle multiple paragraphs
%

\author{Qiufen Ni,
        Jianxiong Guo,
        Weili Wu,~\IEEEmembership{Senior Memeber,~IEEE},
        and Chuanhe Huang % <-this % stops a space

\thanks{This work has been submitted to the IEEE for possible publication. Copyright may be transferred without notice, after which this version may no longer be accessible.}
\thanks{Qiufen Ni (\textit{Corresponding author}) is with School of Computers, Guangdong University of Technology, Guangzhou 510006, China.}% <-this % stops a space
\thanks{Jianxiong Guo and Weili Wu are with Department of Computer Science, The University of Texas at Dallas, Richardson, Texas 75080, USA.}
\thanks{Chuanhe Huang is with School of Computer Science, Wuhan University, Wuhan 430072, China.}}

% note the % following the last \IEEEmembership and also \thanks -
% these prevent an unwanted space from occurring between the last author name
% and the end of the author line. i.e., if you had this:
%
% \author{....lastname \thanks{...} \thanks{...} }
%                     ^------------^------------^----Do not want these spaces!
%
% a space would be appended to the last name and could cause every name on that
% line to be shifted left slightly. This is one of those "LaTeX things". For
% instance, "\textbf{A} \textbf{B}" will typeset as "A B" not "AB". To get
% "AB" then you have to do: "\textbf{A}\textbf{B}"
% \thanks is no different in this regard, so shield the last } of each \thanks
% that ends a line with a % and do not let a space in before the next \thanks.
% Spaces after \IEEEmembership other than the last one are OK (and needed) as
% you are supposed to have spaces between the names. For what it is worth,
% this is a minor point as most people would not even notice if the said evil
% space somehow managed to creep in.

% The paper headers
\markboth{Journ of \LaTeX\ Cla Fil,~Vo.~, N.~, August~2020}%
{Shell \MakeLowercase{\textit{et al.}}: Bare Demo of IEEEtran.cls for IEEE Journals}
% The only time the second header will appear is for the odd numbered pages
% after the title page when using the twoside option.
%
% *** Note that you probably will NOT want to include the author's ***
% *** name in the headers of peer review papers.                   ***
% You can use \ifCLASSOPTIONpeerreview for conditional compilation here if
% you desire.

% If you want to put a publisher's ID mark on the page you can do it like
% this:
%\IEEEpubid{0000--0000/00\$00.00~\copyright~2015 IEEE}
% Remember, if you use this you must call \IEEEpubidadjcol in the second
% column for its text to clear the IEEEpubid mark.

% use for special paper notices
%\IEEEspecialpapernotice{(Invited Paper)}

% make the title area
\maketitle

% As a general rule, do not put math, special symbols or citations
% in the abstract or keywords.
\begin{abstract}
  Community partition is of great importance in social networks because of the rapid increasing network scale, data and applications. We consider the community partition problem under LT model in social networks, which is a combinatorial optimization problem that divides the social network to disjoint $m$ communities. Our goal is to maximize the sum of influence propagation through maximizing it within each community.  As the influence propagation function of community partition problem is supermodular under LT model, we use the method of Lov{$\acute{a}$}sz Extension to relax the target influence function and transfer our goal to maximize the relaxed function over a matroid polytope. Next, we propose a continuous greedy algorithm using the properties of the relaxed function to solve our problem, which needs to be discretized in concrete implementation. Then, random rounding technique is used to convert the fractional solution to integer solution. We present a theoretical analysis with $1-1/e$ approximation ratio for the proposed algorithms. Extensive experiments are conducted to evaluate the performance of the proposed continuous greedy algorithms on real-world online social networks datasets and the results demonstrate that continuous community partition method can improve influence spread and accuracy of the community partition effectively.
\end{abstract}

% Note that keywords are not normally used for peerreview papers.
\begin{IEEEkeywords}
Community Partition, Influence Maximization, Lov{$\acute{a}$}sz extension, Matroid Polytope, Social Networks
\end{IEEEkeywords}

% For peer review papers, you can put extra information on the cover
% page as needed:
% \ifCLASSOPTIONpeerreview
% \begin{center} \bfseries EDICS Category: 3-BBND \end{center}
% \fi
%
% For peerreview papers, this IEEEtran command inserts a page break and
% creates the second title. It will be ignored for other modes.
\IEEEpeerreviewmaketitle

\section{Introduction}
{N}owadays, social networks have gained popularity in real-world applications. So its structure becomes more and more complex and enormously large. It may cause the network traffic congestion as the increasing of the social network users and may increase the communication cost as the data storing in different servers, it may also brings difficulty in dealing with and analysing the complex data sources. A social network can be seen as a graph of social individuals and relationship between them, where the social individuals represent the vertices and the connections between individuals constitute the edges of the graph \cite{ni2020information}. A valuable approach in analysing large complex social networks is community partition. A community is a group of nodes with dense inner connections and relatively sparse connections with nodes outside the group. Community partition is to partition a complex social network into several medium size sub-networks communities, which can help us learn about the relationships and characteristics of the individuals more clearly and improve the process of analysis since each community has similar node distribution with the original complex social networks. Some applications, such as the community-based rumor blocking, community-based active friending and so on, take advantage of the community structure of social networks to help us solve problems more effectively.

Influence maximization is an important problem in social networks. Its target is to select a small set of seed users to trigger a large number of influence propagation, which is widely used in the viral marketing \cite{banerjee2019maximizing, liu2018active, talukder2018cost}. As the uncertainty of human behaviours and decision making affects each other, Kemp {\em et al.}\cite{kempe2003maximizing} proposed two classic influence propagation probabilistic models: Linear  Threshold (LT) model and Independent Cascade (IC) model. They proved that the expected number of active users, called {\em influence spread}, is monotone and submodular with respect to the seed set. They propose a greedy algorithm which can maximize the influence propagation in social networks.  Lu {\em et al.} \cite{lu2017solution}  and Wang {\em et al.} \cite{wang2016list} show that the influence maximization problem is NP-hard under IC model while polynomial-time solvable in LT model. In this paper, we want to study the community partition problem with the goal of maximizing the influence propagation in each community under LT model.

Community partition attracts extensive attentions. To solve it, most of community partition algorithms consider the social network as a graph where each node only belongs to one community \cite{huang2019community}. The second approach is to correlate the social network with a hypergraph that nodes overlap exist between communities, there are some related studies\cite{yang2018hepart}. The third approach associates the concept graph or Galois lattices where nodes share some common characteristics or knowledge, which is a more complex structure than the first two categories and it can give more semantics of communities to the network structure. The output is a Galos hierarchy hypergraph marked with lattice intents\cite{plantie2010photo, fazlali2017adaptive}. In this work, we investigate the influence-based disjoint community partition problem in a graph, as social networks with disjoint community structures are real, such as communities composed of experts in different fields, a work unit’s departmental organization and personnel division communities, etc., this research problem has its practical significance.
We adopt the Lov{$\acute{a}$}sz extension to relax the objective function and design a continuous greedy algorithm to partition a social network into $m$ disjoint communities with the goal of maximizing the influence propagation within each community. Different from the existing heuristic methods in community partition problem, the proposed algorithm ensures the accuracy in community partition and can achieve a good approximation theoretical guarantee at the same time. Besides the community partition, there are also some other works study the local influence propagation instead of the global one\cite{bartal2019modeling, chen2010scalable1}.

The main contributions of this paper can be summarize as follows:

\begin{itemize}
	\item An innovative continuous influence-based community partition method is developed. First, we formulate the community partition problem as partitioning a social network to $m$ disjoint communities with the goal of maximizing the influence propagation within each community.
	\item We use the Lov{$\acute{a}$}sz Extension to relax the objective function and a partition matroid to the domain of the relaxed problem is introduced.
	\item A continuous greedy algorithm is devised based on its continuous extension and its discrete form is proposed to solve the problem in concrete implementation.
	\item We analyze the performance guarantee and get an approximation ratio $1-1/e$ for the proposed algorithms.
	\item We do simulations on three real-world online social networks datasets to verify the high-quality and accuracy of the proposed continuous algorithms.
\end{itemize}

The result of this paper is arranged as follows.
In Section \ref{related-work} we start with summarizing some existing related work. Then in Section \ref{network-model}, the network model and problem formulation are introduced. Section \ref{solution} gives the detail solution for the proposed community partition problem. We also give the theoretical proof of the proposed algorithm in Section \ref{Performance}, and in Section \ref{experiments} the simulation results are presented, while finally, the conclusion is presented in Section \ref{conclusion}.
\section{Related Work}\label{related-work}
Community partition is one kind of community detection method. The community detection is not only important in social networks, but also widely studied in other fields, like biological networks \cite{zapp2019mechanoradicals, al2019biomolecular} and technological networks \cite{kaminski2019clustering, kishore2019nature}. Most of previous work about community detection is presented from the perspective of network structure: (1) Hierarchy-based method. M. Girvan {\em et al.} \cite{girvan2002community} propose a method which can be used for community detection in social and biological networks. They present a greedy algorithm progressively which removes the edge with the most betweenness communities from the network graph. (2) Modularity-based method. In this method, each node is an independent community initially. But if merging two communities can gain a larger modularity, the merging process will go on until the modularity is stable. Newman {\em et al.} \cite{newman2004fast} define a modularity $Q$, then they divide the network into communities based on the value of $Q$, when $Q$ reduces to zero, it represents that no more interior community edges than would be expected by random chance. But these algorithms only limit applying to small size networks. (3) Spectrum-based method. This method is based on the multi-path partition to achieve the spectral clustering. In order to gain overlapping communities, C. Gui {\em et al.} \cite{gui2018overlapping} propose a new algorithm to establish hierarchical structure with detecting the overlap of communities. This algorithm can balance the overlap and hierarchy through spectral analysis method. (4) Dynamic-based method. It dynamically chooses nodes with more gains of community measure function to partition the community. Q. Liu {\em et al.} \cite{liu2017evolutionary} design a algorithm to discovery the link community structure of a dynamic weighted network based on the  fitness of weighted edges and partition density. This method can show the process of the evolution of the link community structure, and it can also detect the overlapping communities. Through adjusting the value of a parameter, we can get the link communities' hierarchical structure. (5) Label propagation method. It is a local-based community detection method depends on the label propagation of nodes. UN Raghavan {\em et al.} \cite{raghavan2007near} assume that every node in the network is allocated a unique label. Each node iteratively accepts the label which is adopted by most of its neighbours until all the nodes' labels achieve a stable state. Then nodes with the same labels will be partitioned to the same community.

There are some community detection works based on the node attributes. The user's topic, tag and behaviour information of a node can be looked as a node's attributes. We can partition community based on these attribute characteristics. X. Teng {\em et al.} \cite{teng2019overlapping} propose a overlapping community detection algorithm which is based on the nodes' attributes in attribute networks. They design two objects: one is the changed extended modularity value and the other is the attribute similarity value. Then they use an encode and decode approach to realize the overlapping community partition.  X. Wang {\em et al.} \cite{wang2015detecting} consider the group feature instead of the individual characteristics in mostly previous works. They classify the ``Internet water army'' to six communities based on their behaviours and design a community detection method which is based on the logistic regression model.

There are also some other classical community partition methods in complex networks, such as based on multi-objective evolutionary algorithm, algorithm based on local expansion. In  \cite{zou2019inverse}, Z. Feng {\em et al.} propose a discrete Gaussian process-based inverse model and a multi-objective optimization method to solve community detection problem in complex networks. Manuel Guerrero {\em et al.} \cite{guerrero2020multi} study the multi-objective variants of community detection problem, the variants includes modularity, conductance metric, the imbalance of the nodes in communities. They propose a Pareto-based multi-objective evolutionary algorithm to optimize different objectives simultaneously. Kamal Berahmand {\em et al.} \cite{berahmand2018community} propose an expansion and detection of core nodes based local approach, the edge weight which is detected based on the node similarity is more accurate, the local algorithm would be more precise. The proposed algorithm can detect all the graph's communities in a network by local and identifying various nodes' roles.  X. Ding {\em et al.} \cite{ding2020node} propose a overlapping community detection algorithm to identify different community structures, which performs local expansion method and boundary re-checking sub-processes in order.

In recent years, some community detection methods combined with the new technologies have emerged. In \cite{2018An}, W. Feifan {\em et al.} propose an extreme learning machine-based community detection algorithm in complex networks, which combines $k$-means and unsupervised ELM. This proposed algorithm is verified to be outstanding in low complexity. G. Sperlf \cite{10.1145/3297280.3297574} proposes a deep learning and network's topology based community detection algorithm. Convolutional neutral network is used to deal with the large dimensions of a social network. Z. Chen {\em et al.} \cite{2017Supervised} propose a novel community detection method with Graph Neural Networks (GNNs) under the supervised learning setting. The experimental results and analyses demonstrate that the loss of local minima is low in linear GNN models.

There are also some community detection works based on influence propagation. This community partition method is based on which community nodes can achieve larger influence. Y. Wang {\em et al.} \cite{wang2010community} present a selecting community approach based on the node's influence in Mobile Social Networks (MSNs), which mining the top-$K$ most influential nodes within each community. Putting them together then as the best $K$ nodes. Z. Lu {\em et al.} \cite{lu2014influence} consider the community partition problem as a combinatorial optimization problem, which dividing $K$ disjoint communities in the target of maximizing the influence in each community. Then they propose a $MKCP$ algorithm to solve this problem. In particular, when $K=2$, they develop an optimal algorithm to partition the social networks to two disjoint sub-networks. N. Barbieri {\em et al.} \cite{barbieri2016efficient} define a stochastic framework to model the social influence of users within a community, and they present an expectation maximization (EM) learning algorithm, which allows the automatic detection of the most appropriate number of communities by enabling a community annihilation mechanism. All these existing influence-based community partition methods are discrete and heuristic, they also do not have a theoretical guarantee.

\section{Network Model and Problem Formulation}\label{network-model}
\subsection{The Network Model}
A social network is modelled as a directed graph $G=(V, E)$, where each vertex $i$ in $V$ is an individual, and each edge $e=(i, j)$ in $E$ is the social tie between user $i$ and $j$. Let $ N^-(i)$ and $N^+(i)$ denote the sets of incoming neighbours and outgoing neighbours, respectively. In LT model, each edge $e \in E$ in the graph is associated with a weight $w_{ij}$, each node $i\in V$ is influenced by its incoming neighbours $j$ satisfies $\sum_{j \in N^-(i)}w_{ij}\leq1$. In addition, each node $i \in V$ is related with a threshold $\theta_i$ which is uniformly distributed in the interval $[0,1]$. The information diffusion process can be described in discrete steps: all nodes that are active in step $t-1$ will still active in step $t$. An inactive node will be active if the total weight of its incoming neighbours that are active is larger than or equal to $\theta_i$, i.e. $\sum_{j \in N^-(i)}w_{ij}\geq\theta_i$. The propagation process ends until there is no new node being activated.

\subsection{Problem Formulation}
Assume that there are $m$ communities, we allocate a community identifier $s_j\in M= \{1,2,...,m\}$ for each node $j$, so all the nodes in the same community have the same community identifier, i.e. $S_i=\{j|s_j=i\}$ represents the node set in community $S_i$, where $1\leq i\leq m$. For a  community $S_k$ and a node $i\in S_k$, we use $\sigma_{S_k}(i)=\sum_{j\in(S_k\backslash i)}p_{S_k}(i, j)$ to denote the influence propagation of node $i$ within community $S_k$. Assume there is a non-empty subset $D\subseteq S_k$, the total influence propagation of all nodes in $D$ within community $S_k$ is denoted by $\sigma_{S_k}(D)=\sum_{i\in D}\sigma_{S_k}(i)$, which can show the reciprocal influence strength among the nodes in community $S_k$. In the rest of the paper, we use $\sigma(Y)$ to  denote the influence propagation of community $Y$ instead of $\sigma_Y(Y)$ for simplicity. So we denote the total influence propagation in the social networks after partitioning to $m$ communities as $f(S_1,S_2,...,S_m)=\sum_{i=1}^{m}\sigma(S_i)$. Next, let us describe the community partition problem we want to solve as follows:

\textbf{Influence Maximization for  Community Partition Problem (IMCPP):} Given a graph $G=(V,E)$ as a social network, its information diffusion is under LT model. We seek a partition of the social network into $m$ disjoint sets \{$S_1, S_2,\dots,S_m$\} satisfying: (1) $\bigcup_{k=1}^{m}(S_k)=V$; (2) ${\forall i\neq j, S_i\cap S_j=\emptyset}$. Our goal is to maximize the influence propagation function $f(S_1,S_2,\dots,S_m)=\sum\limits_{k=1}^{m}\sigma(S_k)$.

Z. Lu {\em et al.} \cite{lu2014influence} proved that the maximum K-community partition problem is NP-hard. Our IMCPP can be reduced to K-community partition problem, thus, the IMCPP is NP-hard.

\section{Solution for IMCPP}\label{solution}
We will prove some important properties of the objective function and how to solve it step by step efficiently in this section.

\subsection{Property of Influence Propagation Function $f$}
First, we need to analyse the properties of the influence propagation function $f$ we want to solve. The first property of $f$ is monotonicity.

\begin{lemma}
	The influence propagation function $f$ for the community partition problem is monotone under the LT model.
	\label{lemma1}
\end{lemma}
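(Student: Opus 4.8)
The plan is to reduce the claim to the monotonicity of the single-community influence function $\sigma(\cdot)$, viewed as a set function on $V$. Since $f(S_1,\dots,S_m)=\sum_{k=1}^{m}\sigma(S_k)$, and adding a node to one community leaves the other communities untouched, it suffices to show that $\sigma(S)\le\sigma(S\cup\{v\})$ for every community $S$ and every $v\notin S$; summing this inequality over the (unchanged) communities then delivers the monotonicity of $f$ directly.

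First I would fix such an $S$ and $v$ and compare the two quantities term by term. Writing out the definition,
\[
\sigma(S)=\sum_{i\in S}\sum_{j\in S\setminus\{i\}} p_S(i,j),
\]
the difference $\sigma(S\cup\{v\})-\sigma(S)$ decomposes into two non-negative contributions: (i) the brand-new terms in which $v$ appears as a source or as a target, each of which is a probability and hence $\ge 0$; and (ii) the increments $p_{S\cup\{v\}}(i,j)-p_S(i,j)$ over the old pairs $i,j\in S$. The whole argument thus rests on showing that the pairwise activation probability is non-decreasing in the community, i.e. $p_S(i,j)\le p_{S\cup\{v\}}(i,j)$ for all $i,j\in S$.

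For this key step I would invoke the standard live-edge characterization of the LT model: each node independently selects at most one incoming edge, picking edge $(u,i)$ with probability $w_{ui}$ and selecting none with the remaining probability $1-\sum_{u\in N^-(i)}w_{ui}$; then $p_S(i,j)$ equals the probability that $j$ is reachable from $i$ along live edges inside the subgraph induced on $S$. Coupling the two experiments on the same realization of live edges, every live-edge path from $i$ to $j$ contained in $S$ is still a valid path in $S\cup\{v\}$, while additional paths routed through $v$ may newly appear. Hence the reachability event for the smaller community is contained in that for the larger one, which gives $p_S(i,j)\le p_{S\cup\{v\}}(i,j)$ and, together with the non-negative new terms, the desired inequality $\sigma(S\cup\{v\})\ge\sigma(S)$.

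I expect the main obstacle to be the rigorous justification of this monotone coupling for $p_S(i,j)$: one must argue that restricting the dynamics to an induced subgraph does not change the edge weights $w_{ij}$, so that the same live-edge distribution governs both $S$ and $S\cup\{v\}$, and that enlarging the community can only enlarge the set of live-edge paths reaching $j$ from $i$. Once this containment of reachability events is established on a common probability space, the remainder is a routine term-by-term comparison, and the monotonicity of $f$ follows by summation over the communities.
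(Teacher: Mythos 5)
Your proposal is correct, and at the top level it follows the same skeleton as the paper: establish that the per-community function $\sigma$ is monotone, then sum over the $m$ communities to get monotonicity of $f$. The substantive difference is in the key step. The paper's proof merely asserts that the marginal gain $\Delta(i|S_k)=\mathbb{E}[\sigma(S_k+i)-\sigma(S_k)]$ is ``obviously'' non-negative and stops there; you actually prove it, by decomposing $\sigma(S\cup\{v\})-\sigma(S)$ into the new source/target terms involving $v$ (non-negative because they are probabilities) plus the pairwise increments $p_{S\cup\{v\}}(i,j)-p_S(i,j)$, and then establishing $p_S(i,j)\le p_{S\cup\{v\}}(i,j)$ via the live-edge characterization of the LT model together with a monotone coupling on a common realization of live edges. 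Notably, this is precisely the decomposition the paper itself uses later in its supermodularity proof (Lemma 2), where the corresponding inequalities on the $p_{(\cdot)}(i,j)$ are again stated without justification; your coupling argument is exactly what is needed to back them up, so your route buys rigor both here and there. One point to phrase carefully in a final write-up: the live-edge distributions on the induced subgraphs $G[S]$ and $G[S\cup\{v\}]$ are not literally the same distribution, but if every node samples its single incoming edge once on the full graph and any sampled edge whose tail lies outside the community is discarded, the restricted samples have exactly the correct marginals for each induced subgraph and the two reachability events become nested on this common probability space, which is all your containment argument requires.
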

\begin{proof}
	Assuming that the influence propagation function within the community $S_k$ is $\sigma(S_k)$. We know that when adding a seed node $i$ to this community, the conditional expected marginal gain produced by $i$ to the community $S_k$ can be denoted as: $\Delta(i|S_k) =\mathbb{E}[\sigma(S_k+i)-\sigma(S_k)]$. Obviously, $\Delta(i|S_k)\geq0$. So $\sigma(S_k)$ is monotone. As we know that the influence propagation function $f$ is: $f(S_1,S_2,\dots,S_m)=\sum_{k=1}^{m}\sigma(S_k)$. Hence, $f$ is also monotone.
\end{proof}
We need to know the definition of supermodular function before we introduce the second property of $f$. Let $X$ with $|X|=n$ be a \textit{ground set}. A set function on $X$ is a function $h$: $2^X\rightarrow R$.
\begin{myDef}[Supermodular function]A set function $h$: $2^X\rightarrow R$ is supermodular if for any $A\subseteq B\subseteq X$ and $u\in X\backslash B$, we have $h(A\cup\{u\})-h(A)\leq h(B\cup\{u\})-h(B)$. There is another equivalent definition for supermodularity, that is $h(A\cap B)-h(A\cup B)\geq h(A)+h(B)$.
\end{myDef}
We found that the influence propagation function $f$ satisfies the supermodularity, shown as the following lemma, that is,
\begin{lemma}
	The influence propagation function $f$ for the community partition problem is supermodular under the LT model.
	\label{lemma2}
\end{lemma}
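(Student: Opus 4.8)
The plan is to reduce the claim to a statement about a single community and then exploit the live-edge (reachability) representation of the LT model. Since $f(S_1,\dots,S_m)=\sum_{k=1}^m\sigma(S_k)$ is a separable sum in which the communities occupy disjoint coordinate blocks, it suffices to show that the single-community influence $\sigma(\cdot)$, viewed as a set function $Y\mapsto\sigma(Y)$ on $2^V$, is supermodular; supermodularity is preserved under such separable sums, and by the marginal-gain form of supermodularity I only need the inequality $\sigma(A\cup\{u\})-\sigma(A)\le\sigma(B\cup\{u\})-\sigma(B)$ for all $A\subseteq B\subseteq V$ and $u\in V\setminus B$.

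First I would rewrite $\sigma(Y)=\sum_{i\ne j\in Y}p_Y(i,j)$, where $p_Y(i,j)$ is the probability that $i$ activates $j$ when the LT process runs on the induced subgraph $G[Y]$. The main device is a global coupling of all the LT randomness: for each node $v$, draw a single uniform variable $U_v$ and use it to select at most one live in-edge $\pi(v)$ according to the weights $w_{\cdot v}$ (with ``no edge'' absorbing the remaining mass $1-\sum_u w_{uv}$). The crucial observation is that, inside any induced subgraph $G[Y]$, the live in-edge of $v$ is exactly $\pi(v)$ whenever $\pi(v)\in Y$ and is absent otherwise; hence enlarging $Y$ never alters an in-edge that is already present, it can only switch an absent one on. Under this coupling the backward chain of predecessors of any node is a fixed, globally determined sequence, so for a fixed realization there is a unique directed path $P_{ij}$ from $i$ to $j$ (when one exists), and $i$ activates $j$ inside $G[Y]$ if and only if every vertex of $P_{ij}$ lies in $Y$, i.e. the event $\{P_{ij}\subseteq Y\}$.

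The proof then closes quickly. For a fixed vertex set $A$, the indicator $Y\mapsto\mathbf{1}[A\subseteq Y]$ is supermodular, because its marginal gain from adding $u$ equals $1$ exactly when $u\in A$ and $A\setminus\{u\}\subseteq Y$, an event monotone increasing in $Y$, while the gain is never negative. Summing over all ordered pairs shows that the pathwise influence count $g(Y)=\sum_{i\ne j}\mathbf{1}[P_{ij}\subseteq Y]$ is supermodular for every realization of the $U_v$, and since supermodularity is preserved under nonnegative combinations and under expectation, $\sigma(Y)=\mathbb{E}[g(Y)]$, and hence $f$, is supermodular. Intuitively this is the mirror image of the classical submodular seed-selection result: here the set being grown is the propagation universe rather than the seed set, so enlarging it enables additional relay paths and produces increasing marginal returns.

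The step I expect to be the main obstacle is making the coupling airtight, not the final algebra. I must argue carefully that restricting LT to $G[Y]$ is faithfully represented by retaining each node's in-edge weights and pushing the weight of any edge whose tail leaves $Y$ into the ``no edge'' event, so that a chosen in-edge is genuinely stable under enlarging $Y$; and I must handle the case where the backward predecessor chain enters a cycle, by defining $P_{ij}$ through the first occurrence of $i$ on the chain, so that $P_{ij}$ is a finite vertex set and the equivalence between ``$i$ reaches $j$ in $G[Y]$'' and $\{P_{ij}\subseteq Y\}$ stays exact. Once these two points are settled, supermodularity follows immediately from the superset-indicator computation.
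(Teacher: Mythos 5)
Your proof is correct, and it takes a genuinely different (and more rigorous) route than the paper's, while reaching the same conclusion. The paper also reduces to a single community and works with live-edge path probabilities $p_Y(i,j)$, but it argues by directly decomposing the marginal gain $\sigma(S_a\cup\{q\})-\sigma(S_a)$ into three sums (paths out of $q$, paths into $q$, and the increase in pairwise probabilities from routing through $q$) and comparing the decompositions for $S_a\subset S_b$ term by term; the crucial third comparison, $p_{S_a\cup\{q\}}(i,j)-p_{S_a}(i,j)\leq p_{S_b\cup\{q\}}(i,j)-p_{S_b}(i,j)$, is merely asserted (``it also follows that''), and since that inequality is itself a supermodularity-type statement, the paper's proof leaves its crux unjustified. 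Your coupling argument supplies exactly what is missing there: fixing one global in-edge selection per node (faithful for LT because restricting to $G[Y]$ just absorbs the weight of edges with tails outside $Y$ into the ``no edge'' event), every pairwise reachability event becomes a single superset event $\{P_{ij}\subseteq Y\}$ for a unique, realization-determined path, so supermodularity holds pointwise via the elementary supermodularity of superset indicators and survives nonnegative summation and expectation. Your approach also isolates the precise structural feature of LT that makes the lemma true --- in-degree at most one in the live-edge graph, hence unique paths --- which explains why the same claim fails under IC, where reachability is a maximum of superset indicators rather than a single one; the paper's term-by-term comparison obscures this distinction.
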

\begin{proof}
	Assume there are two communities $S_a$ and $S_b$, and $S_a \subset S_b$, so we have to prove that for any node $q\notin S_b$, $\sigma(S_a\cup\{q\})-\sigma(S_a)\leq \sigma(S_b\cup\{q\})-\sigma({S_b})$, this is the condition that a function is supermodular. If there is a live-edge path from seed node $i$ to $j$, it indicates that node $j$ is influenced by seed $i$. Let $p_{S_a}(i,j)$ be the  probability that node $j$ receives influence from node $i$ through nodes within community $S_a$ and $p_{S_b}(i,j)$ be the  probability that node $j$ receives influence from node $i$ through nodes within community $S_b$. So we have: $\sigma(S_a\cup\{q\})-\sigma(S_a)=\sum_{j\in S_a}p_{S_a}(q,j)+\sum_{i\in S_a}p_{S_a}(i,q)+\sum_{i,j\in S_a:i\neq j}\{p_{S_a\cup\{q\}}(i,j)-p_{S_a}(i,j)\}$, this is the sum of the probabilities that  the path must pass $q$ one time in community $(S_a\cup\{q\})$. Accordingly, we can calculate the sum of the probabilities that the path must pass $q$ one time in community $(S_b\cup\{q\})$ as $\sigma(S_b\cup\{q\})-\sigma(S_b)=\sum_{j\in S_b}p_{S_b}(q,j)+\sum_{i\in S_b}p_{S_b}(i,q)+\sum_{i,j\in S_b:i\neq j}\{p_{S_b\cup\{q\}}(i,j)-p_{S_b}(i,j)\}$. As we know that $S_a \subset S_b$, we can get that $\sum_{j\in S_a}p_{S_a}(q,j)\leq\sum_{j\in S_b}p_{S_b}(q,j)$, $\sum_{i\in S_a}p_{S_a}(i,q)\leq\sum_{i\in S_b}p_{S_b}(i,q)$, because $\{S_a\cup\{q\}\}$ is also the subset of  $\{S_b\cup\{q\}\}$. It also follows that $p_{S_a\cup\{q\}}(i,j)-p_{S_a}(i,j)\leq p_{S_b\cup\{q\}}(i,j)-p_{S_b}(i,j)$. So we can get the inequality $\sigma(S_a\cup\{q\})-\sigma(S_a)\leq \sigma(S_b\cup\{q\}-\sigma({S_b})$. Therefore, the influence propagation function $\sigma$ within each community is supermodular under LT model. As $f(S_1,S_2,\dots,S_m)=\sum_{k=1}^{m}\sigma(S_k)$. Therefore, the sum influence propagation function $f$ in a social network for the community partition problem under LT model is also supermodular.
\end{proof}

\subsection{Reformulation of the IMCPP}
First, we need to introduce some basic definitions about matroid and matroid polytopes which will be used later.
\begin{myDef} [Matroid polytopes]Given a matroid $\mathcal{M}=(X,\mathcal{I})$, the matroid polytope $P(\mathcal{M})$ is the convex hull of the indicators of the bases of $\mathcal{M}$ and defined as:
	\begin{equation*}
	P(\mathcal{M})=conv\{\vec{1}_I: I\in \mathcal{I}\}.
	\label{equation11}
	\end{equation*}
	$\mathcal{I}$ is a family of subsets of ground set $X$ (called independent sets).
\end{myDef}
The matroid polytopes $P(\mathcal{M})$ is down-monotone because it satisfies the property that for any $0\leq x\leq y, y\in P\Rightarrow x\in P$.

Then, we generalize the IMCPP problem to a matroid constraint, which is easier to be solved. Here, we define a new ground set $U= M\times V$, where $M$ is the community set and $V$ is the node set of the given graph. Let $A\subseteq U$ be a feasible solution, namely a feasible community partition combination. Here, $(i, j)\in A$ means that we partition the node $j$ to community $i$. As we can not partition the same node to more than one community, thus, a feasible solution satisfies the following constraint, that is
\begin{equation*}
\forall j\in V, |\{i|(i,j)\in A\}|\leq 1
\label{equation1}
\end{equation*}
Then the influence function of a partition $A$ can be denoted as:
\begin{equation*}
f(A)=\sum\limits_{i\in M}\sigma(\{j|(i,j)\in A\})
\end{equation*}
Thus, the IMCPP can be  written as follows:

\begin{equation}
\begin{split}
&\quad\max\limits_{A\subseteq U}f(A)\\
&\quad s.t. \  \forall j\in V, |\{i|(i,j)\in A\}|\leq 1
\end{split}
\label{equation2}
\end{equation}
Therefore, let us define a partition matroid $\mathcal{M}=(U,\mathcal{I})$ as follows:
\begin{equation*}
\mathcal{I}=\{X\subseteq U:|X\cap(M\times \{j\})|\leq 1 \text{ for } j\in V\}
\end{equation*}
Then the IMCPP problem is equivalent to maximize $\{f(A):A\in\mathcal{I}\}$. Any set $A\in\mathcal{I}$ is called independent set.

\subsection{Relaxation of IMCPP}
In this section, we give a continuous relaxation of our optimization problem, shown as Equation \ref{equation2}. First, we need to introduce a continuous extension for an arbitrary set function: Lov{$\acute{a}$}sz extension. It was defined by Lov{$\acute{a}$}sz in \cite{lovasz1983submodular}.
\begin{myDef}[Lov{$\acute{a}$}sz extension]
	For a  function $h$: $2^X\rightarrow R$, $\vec x\in R^X$. Assume that the elements in ground set $X=\{v_1,v_2,\cdots,v_n\}$ are sorted from maximum to minimum such that $x_1\geq x_2\geq\cdots\geq x_n$. Let $S_i=\{v_1,\cdots,v_i\}, \forall v_i\in X$.  The Lov{$\acute{a}$}sz Extension 	$\hat{h}(\vec {x}):[0,1]^X\rightarrow R$ of $h$ at $\vec x$ is defined as:
	\begin{equation*}
	\hat{h}(\vec{x})=\sum\limits_{i=1}^{n-1}(x_i-x_{i+1})h(S_i)+x_nh(S_n)
	\end{equation*}
	\label{definition3}
\end{myDef}
There are other forms to describe the definition of Lov{$\acute{a}$}sz extension, but in this paper, our discussion is based on this form. We should note that $\hat{h}$ is well-defined and positively homogeneous. It satisfies:
\begin{equation*}
\hat{h}({\vec1_S})=h(S), \forall S\subseteq X
\end{equation*}
where $\vec 1_S\in\{0,1\}^X$ is the characteristic vector of $S$.

There is an equivalent definition to describe the Lov{$\acute{a}$}sz extension: $\hat{h}(\vec{x})=\mathbb{E}[h(\{v_i\in X:x_i>\lambda\})]$, where $\lambda$ is uniformly random in $[0,1]$. That is,
\begin{equation*}
\hat{h}({\vec x})=\int_{\lambda=0}^{1}h(\{v_i\in X:x_i>\lambda\})d\lambda
\end{equation*}
We can understand the Lov{$\acute{a}$}sz extension $\hat{h}(\vec {x})$ as the expectation value of $h$ on a distribution $\hat{O}(\vec {x})$.  The distribution $\hat{O}(\vec {x})$ satisfies that it gets the largest subset $S_n$ with probability ${x}(v_n)$, and gets the next largest subset $S_{n-1}$ with probability ${x}(v_{n-1})-{x}(v_{n})$ and so on. We should notice that the definition of $\hat{h}(\vec{x})$ is oblivious. It does not depend on a particular function $h$.

Then we describe the process of relaxing the IMCPP. We introduce a decision variable $x_{ij}\in[0,1]$ for all $(i,j)\in M\times V$ where $x_{ij}$ is the probability that node $j$ is allocated to community $i$. Thus,
\begin{equation*}
\sum\limits_{i\in M}x_{ij}\leq 1, j\in V
\end{equation*}
The domain of the relaxed problem can be denoted as:
\begin{equation*}
P(\mathcal{M})=\{\vec x\in[0,1]^{m\times n}:\forall j\in V, \sum\limits_{i\in M}x_{ij}\leq 1\}
\end{equation*}
We use Lov{$\acute{a}$}sz extension $\hat{f}(\vec {x})$ to relax the influence function $f$ as follows:
\begin{equation}
\hat{f}(\vec {x})=\mathbb{E}_{\lambda\sim[0,1]}[f(\{(i,j)\in U: x_{ij}\textgreater \lambda\})]
\label{equation6}
\end{equation}
where $\lambda$ is uniformly random in [0, 1].

The relaxation of our problem, Equation \ref{equation2}, can be expressed as follows:
\begin{equation}
\begin{split}
&\quad\max\limits_{{x}}\hat{f}(\vec {x})\\
&\quad s.t. \  \vec x\in P(\mathcal{M})
\end{split}
\label{equation88}
\end{equation}

So we transfer our goal to maximize the Lov{$\acute{a}$}sz extension $\hat{f}(\vec {x})$ of influence function $f$ over a matroid polytope $P(\mathcal{M})$.

\begin{lemma}
	A set function $h:2^X\rightarrow R$ is submodular (or supermodular) if and only if its Lov{$\acute{a}$}sz extensions $\hat{h}$ is convex (or concave).
	\label{lemma3}
\end{lemma}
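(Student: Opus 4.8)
The plan is to reduce the entire statement to the single equivalence ``$h$ is submodular if and only if $\hat h$ is convex'', and then obtain the supermodular/concave case for free. The Lov\'asz extension is linear in the underlying set function, so $\widehat{-h}=-\hat h$; since $h$ is supermodular exactly when $-h$ is submodular, and $\hat h$ is concave exactly when $-\hat h=\widehat{-h}$ is convex, the second equivalence follows immediately once the first is established. Throughout I would adopt the usual normalization $h(\emptyset)=0$ (consistent with $\sigma(\emptyset)=0$), so that Definition~\ref{definition3} and its integral form coincide.

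For the easy direction, convexity of $\hat h$ $\Rightarrow$ submodularity of $h$, I would argue by evaluating $\hat h$ at a single carefully chosen point. Fix $A\subseteq X$ and distinct $a,b\in X\setminus A$, and consider the two indicator vectors $\vec 1_{A\cup\{a\}}$ and $\vec 1_{A\cup\{b\}}$, whose midpoint is $\vec x=\vec 1_A+\tfrac12\vec 1_{\{a\}}+\tfrac12\vec 1_{\{b\}}$. Computing $\hat f$... rather $\hat h(\vec x)$ directly from Definition~\ref{definition3} (the coordinates of $A$ sit at level $1$, those of $a,b$ at level $\tfrac12$, the remaining ones at $0$) gives $\hat h(\vec x)=\tfrac12 h(A)+\tfrac12 h(A\cup\{a,b\})$, while the interpolation property $\hat h(\vec 1_S)=h(S)$ gives $\hat h(\vec 1_{A\cup\{a\}})=h(A\cup\{a\})$ and $\hat h(\vec 1_{A\cup\{b\}})=h(A\cup\{b\})$. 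Convexity then forces $\tfrac12 h(A)+\tfrac12 h(A\cup\{a,b\})\le\tfrac12 h(A\cup\{a\})+\tfrac12 h(A\cup\{b\})$, which rearranges exactly to the submodular inequality $h(A\cup\{a,b\})+h(A)\le h(A\cup\{a\})+h(A\cup\{b\})$.

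The substantive direction is submodularity of $h$ $\Rightarrow$ convexity of $\hat h$, and this is where the main work lies. My preferred route is to exhibit $\hat h$ as a pointwise maximum of linear functionals. By Abel summation, Definition~\ref{definition3} can be rewritten as $\hat h(\vec x)=\langle\vec x,\vec y^{\,\vec x}\rangle$, where $y^{\,\vec x}_{v_i}=h(S_i)-h(S_{i-1})$ is the discrete gradient along the ordering induced by $\vec x$ (with $S_0=\emptyset$). The key claim, Edmonds' greedy characterization, is that when $h$ is submodular this greedy vector maximizes $\langle\vec x,\vec y\rangle$ over the base polytope $B(h)=\{\vec y:\vec y(S)\le h(S)\ \forall S\subseteq X,\ \vec y(X)=h(X)\}$, so that $\hat h(\vec x)=\max_{\vec y\in B(h)}\langle\vec x,\vec y\rangle$; a pointwise maximum of linear functionals is convex, which finishes the proof. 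The crux, and the step I expect to be the main obstacle, is proving this greedy optimality: that submodularity is precisely what makes $\vec y^{\,\vec x}$ both feasible and optimal for every $\vec x$. I would verify feasibility $\vec y^{\,\vec x}(S)\le h(S)$ from submodularity applied along the chain $S_1\subset\cdots\subset S_n$, and optimality from LP duality, the greedy solution being complementary-slack with that chain. An equivalent but more hands-on alternative is the direct piecewise-linear picture: $\hat h$ is linear on each permutation cone, its only ``creases'' occur across facets where two adjacent sorted coordinates are equal, and the local convexity condition across each such facet collapses to a single submodular inequality, exactly as in the easy direction above.
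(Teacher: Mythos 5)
Your proposal is correct in outline, but you should know how it relates to the paper: the paper does not prove this lemma at all --- its entire ``proof'' is the single sentence ``This conclusion was shown by Jan Vondr\'ak in \cite{vondrak2010continuous}.'' So any actual argument is, by definition, a different route. What you have written is essentially the classical proof that the cited source contains: reduce the supermodular/concave case to the submodular/convex case by negation (using linearity of the Lov\'asz extension in $h$, so $\widehat{-h}=-\hat h$); get the easy direction by evaluating $\hat h$ at the midpoint of $\vec 1_{A\cup\{a\}}$ and $\vec 1_{A\cup\{b\}}$, which Definition~\ref{definition3} evaluates to $\tfrac12 h(A)+\tfrac12 h(A\cup\{a,b\})$; and get the hard direction from Edmonds' greedy theorem, exhibiting $\hat h(\vec x)=\max_{\vec y\in B(h)}\langle\vec x,\vec y\rangle$ as a support function and hence a pointwise maximum of linear functionals. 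Two small points of rigor you should tighten if you write this out in full. First, the inequality your easy direction yields is the pairwise (``local'') form $h(A)+h(A\cup\{a,b\})\le h(A\cup\{a\})+h(A\cup\{b\})$, whereas the paper's Definition~1 is the marginal-gain form over all $A\subseteq B$; these are equivalent, but by a short induction on $|B\setminus A|$ that should be stated rather than elided. Second, feasibility of the greedy vector is not literally ``submodularity along the chain'': for an arbitrary $S\subseteq X$ one applies submodularity to the pairs $(S_{i-1}\cap S,\,S_{i-1})$ with element $v_i\in S$ and telescopes to get $\vec y^{\,\vec x}(S)\le h(S)$. It is also worth noticing that your Abel-summation identity $\hat h(\vec x)=\sum_i x_i\bigl(h(S_i)-h(S_{i-1})\bigr)$ is exactly the computation the paper performs separately in its partial-derivative lemma (its Lemma~4), so your argument dovetails with machinery the paper already sets up. What your approach buys is a self-contained proof of a result the paper merely outsources; what the paper's citation buys is brevity, which is defensible since the lemma is a known result peripheral to the paper's contribution.
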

\begin{proof}
	This conclusion was shown by Jan Vondr{$\acute{a}$}k in \cite{vondrak2010continuous}.
\end{proof}
\begin{theorem}
	The relaxation  $\hat{f}(\vec {x})$ of objective function for IMCPP, shown as Equation \ref{equation6}, is monotone and concave.
\end{theorem}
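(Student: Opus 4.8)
The plan is to prove the two assertions separately, since each reduces to a property of $f$ that has already been established. The concavity is essentially immediate: Lemma~\ref{lemma2} shows that $f$ is supermodular under the LT model, and Lemma~\ref{lemma3} states that a set function is supermodular if and only if its Lov{$\acute{a}$}sz extension is concave. Chaining these two facts yields that $\hat{f}(\vec{x})$ is concave on its domain, so no further work is needed beyond citing the characterization.

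For monotonicity I would work from the integral (threshold) representation of the extension, namely $\hat{f}(\vec{x}) = \int_{0}^{1} f(\{(i,j)\in U: x_{ij} > \lambda\})\,d\lambda$, which the excerpt already records as equivalent to the sorted-coordinate definition. Fix two points $\vec{x},\vec{y}\in P(\mathcal{M})$ with $\vec{x}\le\vec{y}$ coordinatewise. The key observation is that for every fixed threshold $\lambda\in[0,1]$ the superlevel sets are nested, $\{(i,j): x_{ij}>\lambda\}\subseteq\{(i,j): y_{ij}>\lambda\}$, because $x_{ij}>\lambda$ forces $y_{ij}\ge x_{ij}>\lambda$. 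Then I would invoke Lemma~\ref{lemma1}, the monotonicity of $f$, to conclude $f(\{(i,j): x_{ij}>\lambda\})\le f(\{(i,j): y_{ij}>\lambda\})$ pointwise in $\lambda$, and integrate this inequality over $\lambda\in[0,1]$. Since integration preserves pointwise inequalities, it follows that $\hat{f}(\vec{x})\le\hat{f}(\vec{y})$, which is exactly monotonicity of the extension.

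The proof is short because the substantive content was already isolated in Lemmas~\ref{lemma1}--\ref{lemma3}; the only step requiring care is the reduction to the threshold form of the extension and the remark that it is precisely the nesting of superlevel sets that transfers set-monotonicity of $f$ into coordinatewise monotonicity of $\hat{f}$. I do not anticipate a genuine obstacle here. If one wished to avoid the integral representation, one could instead argue from the sorted-coordinate formula directly, but that route is messier: raising a single coordinate can alter the sorting permutation and hence the chain $S_1\subseteq\cdots\subseteq S_n$ used in the definition, forcing case analysis. The threshold form sidesteps that bookkeeping entirely, so I would use it throughout.
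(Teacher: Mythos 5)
Your proposal is correct, and its concavity half is exactly the paper's argument: cite Lemma~\ref{lemma2} for supermodularity of $f$ and Lemma~\ref{lemma3} for the equivalence ``supermodular $\Leftrightarrow$ concave Lov{$\acute{a}$}sz extension.'' Where you genuinely diverge is the monotonicity half. The paper disposes of both properties in one sentence, deriving ``monotone and concave'' from Lemma~\ref{lemma3} --- but that lemma says nothing about monotonicity, so the paper's proof silently assumes that monotonicity of the set function $f$ transfers to its extension $\hat{f}$. Your proposal supplies precisely the missing justification: from the threshold representation $\hat{f}(\vec{x})=\int_{0}^{1} f(\{(i,j): x_{ij}>\lambda\})\,d\lambda$, coordinatewise $\vec{x}\le\vec{y}$ gives nested superlevel sets for every $\lambda$, Lemma~\ref{lemma1} gives the pointwise inequality $f(\{x_{ij}>\lambda\})\le f(\{y_{ij}>\lambda\})$, and integration preserves it. This is the standard and correct way to prove the transfer, and your remark about why the sorted-coordinate formula would be messier (a raised coordinate can change the sorting permutation and hence the chain $S_1\subseteq\cdots\subseteq S_n$) is accurate. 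In short: same skeleton as the paper, but your version closes a gap the paper leaves open, at the cost of only a few lines.
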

\begin{proof}
	From Lemma \ref{lemma1} and Lemma \ref{lemma2}, we know that influence propagation function $f$ is monotone and supermodular. Based on Lemma  \ref{lemma3}, we have its Lov{$\acute{a}$}sz extensions $\hat{f}(\vec {x})$ is monotone and concave.
\end{proof}

\subsection{The Continuous Greedy Process}
Based on the monotone and concave property of $\hat{f}(\vec {x})$, we design a continuous process and produce a set $ \vec x\in P(\mathcal{M})$ which approximates the optimum solution $OPT=\max\{\hat{f}(\vec {x}):  \vec x\in P(\mathcal{M})\}$.  The vector moves in direction constrained by $P(\mathcal{M})$ until it achieves a local maximum gain.

In order to observe how  $\hat{f}(\vec {x})$ behaves along coordinates axes, we need to show the property of the derivative of $\hat{f}(\vec {x})$, shown as the following lemma:
\begin{lemma}
	The partial derivative for $x_i$ of the Lov{$\acute{a}$}sz extensions $\hat{h}(\vec {x})$ of a set function $h$ is
	\begin{equation*}
	\frac{\partial \hat{h}(\vec{x})}{\partial x_{i}}=h(S_i)-h(S_{i-1})
	\end{equation*}
	where $x_1\geq x_2\geq\cdots x_n$ and $S_i=\{1,2,\cdots,i\}$.
\end{lemma}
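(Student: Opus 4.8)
The plan is to differentiate the explicit piecewise-linear formula for $\hat{h}$ given in Definition \ref{definition3}. The key preliminary observation is that the sorted order $x_1 \geq x_2 \geq \cdots \geq x_n$ determines which sets $S_i = \{v_1, \ldots, v_i\}$ enter the expansion, and that this order is \emph{locally constant}: on the open region where the inequalities are strict, $x_1 > x_2 > \cdots > x_n$, any sufficiently small perturbation of $\vec{x}$ preserves the ranking, so the sets $S_1, \ldots, S_n$ and hence the values $h(S_i)$ do not change. On this region $\hat{h}(\vec{x}) = \sum_{i=1}^{n-1}(x_i - x_{i+1})h(S_i) + x_n h(S_n)$ is an honest linear function of the coordinates, and one may differentiate term by term.

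First I would isolate the coefficient of $x_i$ in this linear expression. For $2 \leq i \leq n-1$ the coordinate $x_i$ occurs in exactly two summands: it enters $(x_i - x_{i+1})h(S_i)$ with coefficient $+h(S_i)$ and $(x_{i-1} - x_i)h(S_{i-1})$ with coefficient $-h(S_{i-1})$, so that $\partial \hat{h}/\partial x_i = h(S_i) - h(S_{i-1})$. The two boundary indices are handled identically once we adopt the convention $S_0 = \emptyset$: for $i = 1$ only the term $+h(S_1)$ survives alongside $-h(S_0)$, while for $i = n$ the contribution $+h(S_n)$ comes from the final summand $x_n h(S_n)$ and $-h(S_{n-1})$ from $(x_{n-1}-x_n)h(S_{n-1})$. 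In every case the telescoping structure of the definition leaves precisely $h(S_i) - h(S_{i-1})$.

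The main obstacle is not the algebra, which is a one-line coefficient count, but rather the legitimacy of differentiating at all: $\hat{h}$ is only piecewise linear, and on the hyperplanes where two coordinates coincide ($x_i = x_{i+1}$) the active ordering, and therefore the formula, changes, so $\hat{h}$ need not be differentiable there. I would therefore state the result on the open cell where the ranking $x_1 > x_2 > \cdots > x_n$ is strict, where the gradient is genuinely well defined; at tie points the same expression survives only as a one-sided directional derivative, which suffices for the continuous-greedy analysis that follows because the process advances along coordinate directions inside such a cell. With this caveat in place, the proof reduces to the term-by-term differentiation above.
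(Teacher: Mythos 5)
Your proposal is correct and follows essentially the same route as the paper: expand the Lov\'asz extension from Definition \ref{definition3}, regroup the telescoping sum to collect the coefficient of each $x_i$, and read off $\partial \hat{h}/\partial x_i = h(S_i)-h(S_{i-1})$ (using $h(S_0)=h(\emptyset)=0$ for the boundary case). Your added caveat that the formula is only an honest derivative on the open cell where the ordering $x_1 > x_2 > \cdots > x_n$ is strict is a rigor point the paper silently omits, but the core coefficient-counting argument is identical.
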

\begin{proof}
	As $h(S_0)=h(\emptyset)=0$,	based on the Definition \ref{definition3}, we have that
	\begin{equation*}
	\begin{aligned}
	\hat{h}(\vec {x})&=\sum\limits_{i=1}^{n-1}(x_i-x_{i+1})h(S_i)+x_nh(S_n)\\
	&=(x_1-x_2)h(S_1)+(x_2-x_3)h(S_2)+(x_3-x_4)h(S_3)\\&+\cdots+(x_{n-1}-x_n)h(S_{n-1})+x_nh(S_n)\\
	&=h(S_1)x_1+(h(S_2)-h(S_1))x_2+\cdots+(h(S_n)-\\&h(S_{n-1}))x_n
	\end{aligned}
	\end{equation*}
	So we can get that
	\begin{equation*}
	\frac{\partial \hat{h}(\vec{x})}{\partial x_{i}}=h(S_i)-h(S_{i-1})
	\end{equation*}
	\label{lemma4}
\end{proof}

In our IMCPP problem, we have $\vec x\in[0,1]^{M\times V}$, and let $|V|=n$, we can consider $\vec x$ as a vector such that
\begin{equation*}
\vec x=(x_{11},\cdots,x_{1n},\cdots,x_{i1},\cdots,x_{in},\cdots,x_{m1},\cdots,x_{mn})
\end{equation*}
Then, we sort this vector $\vec x$ to get its sorted vector
\begin{equation}
\vec x'=(x'_{11},\cdots,x'_{1n},\cdots,x'_{i1},\cdots,x'_{in},\cdots,x'_{m1},\cdots,x'_{mn})
\label{equation20}
\end{equation}
that satisfying $x'_{ij}\geq x'_{lk}$ if $i<l$ or $i=l\land j<k$. We denote $\Omega(x_{ij})=x'_{lk}$ and $\Omega^{-1}(x'_{lk})=x_{ij}$, which means that the element $x_{ij}$ in vector $\vec x$ corresponds to the element $x'_{lk}$ in sorted vector $\vec x'$. Let $\Gamma(x_{ij})=(i,j)$ and we have
\begin{equation*}
\begin{aligned}
S'_{lk}&=\{\Gamma(\Omega^{-1}(x'_{11})),\cdots,\Gamma(\Omega^{-1}(x'_{1n})),\cdots,\Gamma(\Omega^{-1}(x'_{l1})),\\&\cdots,\Gamma(\Omega^{-1}(x'_{lk}))\}
\end{aligned}
\end{equation*}
To compute the derivative of $\hat{f}(\vec{x})$, we have
\begin{equation}
\frac{\partial \hat{f}(\vec{x})}{\partial x_{ij}}=f(S'_{lk})-f(S'_{l(k-1)})
\label{equation9}
\end{equation}
where $\Omega(x_{ij})=x'_{lk}$.

From Equation \ref{equation9}, we can see that the derivative of $\hat{f}(\vec {x})$ for $x_{ij}$ just equals the marginal gain of influence propagation when partitioning node $j$ to community $i$ as $\Omega(x_{ij})=x'_{lk}$. So we can take advantage of this property to find a solution.

As $\hat{f}(\vec x)$ is non-decreasing monotone, for any $(i,j)\in M\times V$, $\frac{\partial \hat{f}(\vec x)}{\partial x_{ij}}\geq 0$. Thus the gradient of $\hat{f}(\vec {x})$ is a positive vector, i.e.
\begin{equation*}
\nabla \hat{f}(\vec x)={
	\left[\begin{array}{ccc}
	\frac{\partial \hat{f}(\vec x)}{\partial x_{11}},&\cdots,&\frac{\partial \hat{f}(\vec x)}{\partial x_{1n}}\\
	\vdots&\vdots&\vdots\\
	\frac{\partial \hat{f}(\vec x)}{\partial x_{m1}}&\cdots,&\frac{\partial \hat{f}(\vec x)}{\partial x_{mn}}
	\end{array}
	\right]}\geq\vec 0
\end{equation*}

Next, we begin to design the continuous process. Let $\vec x$ start from $\vec x(0)=\vec 0$ and follow a certain flow over a unit time interval:
\begin{equation*}
\frac{d\vec x(t)}{dt}=\vec v_{max}(\vec x(t)),
\end{equation*}
Then we define $\vec v_{max}(\vec x)$ as
\begin{equation*}
\vec v_{max}(\vec x(t))=\arg\max\limits_{v\in P}(v\cdot\nabla\hat{f}({\vec x(t)}))
\end{equation*}
$\vec v_{max}(\vec x)$ denotes that when an element $j$ is added to community $i$ at time $t$, the direction in which the rate of change of the tangent line of function $\hat{f}(\vec {x})$ is greatest. Based on the Equation \ref{equation9}, we know that this can bring the greatest gain for the influence propagation function $f$. As $t\in[0,1]$, we have
\begin{equation}
\vec x(t)=\int_0^t\frac{d\vec x(\tau)}{d\tau}d\tau=\int_0^t \vec v_{max}(\vec x(\tau))d\tau
\label{equation66}
\end{equation}
Next, we propose the continuous greedy algorithm for the problem, which is shown in Algorithm 	\ref{alg1}.
\begin{algorithm}[!t]
	\caption{\textbf{Continuous Greedy Algorithm}}
	\begin{algorithmic}[1]
		\Input  Graph $G$, $\mathcal{M}=(U,\mathcal{I})$, $f$
		\Output{$\vec x(1)$}
		\State Initialize $\vec x(0) = \vec 0$	
		\For {each $t\in[0,1]$}
		\State For each $(i,j)\in M\times V$, let $w_{ij}(t)=f(S'_{lk})-f(S'_{l(k-1)})$, where $\Omega(x_{ij})=x'_{lk}$
		\State $\vec v_{max}(\vec x(t))=\arg\max\limits_{\vec v\in P}(\vec v\cdot \vec w(t))$
		\State Increase $\vec x(t)$ at a rate of $\vec v_{max}(\vec x(t))$		
		\EndFor
		\State\Return $\vec x(1)$
	\end{algorithmic}
	\label{alg1}
\end{algorithm}

In this algorithm, $t$ ranges from $0$ to $1$. For each time step, we need to calculate the value of $w_{ij}(t)$. Its meaning was illustrated in Equation \ref{equation9}. The step 4 shows that $\vec v_{max}(\vec x(t))$ always equals the value which maximizes $\vec v\cdot \vec w(t)$ in every iteration. It also means that we find the maximum marginal gain value of $\hat{f}(\vec {x})$ when $\vec v(\vec x(t))\leftarrow\vec v_{max}(\vec x(t))$. Then $\vec x(t)$ increases at the rate of $\vec v_{max}(\vec x(t))$ obtained in step 4. After the for loop, we get the value of $\vec x(1)$ which is a convex combination of independent sets.

\subsection{Discrete Implementation}
Actually, the continous greedy algorithm solves our objective function by calculating the integral, shown as Equation \ref{equation66}. But it is hard to implement usually. So in this section, we discretize the continuous greedy algorithm. Given the time step $\Delta t$, the discrete version is shown as follows:

\begin{enumerate}
	\item Start with $t=0$ and $\vec x(0)=\vec 0$.
	\item Obtain $\vec w(t)$: Sort vector $\vec x(t)$ from maximum to minimum and get vector $\vec x'(t)$, shown as Equation \ref{equation20}. For each element $x_{ij}(t)$, we define $w_{ij}(t)=f(S'_{lk}(t))-f(S'_{l(k-1)}(t))$, where $\Omega(x_{ij}(t))=x'_{lk}(t)$.
	\item Let $I^*(t)$ be the maximum-weight independent set in $\mathcal{I}$ according to $\vec w(t)$.
	\item $\vec x(t+\Delta t)\leftarrow \vec x(t)+\vec1_{I^*(t)}\cdot\Delta t$.
	\item Increment $t=t+\Delta t$; if $t<1$, go back to step 2; Otherwise, return $\vec x(1)$.
\end{enumerate}
\noindent
where we denote $\nabla \hat{f}({\vec x(t)})$ by $\vec w(t)$. Because $v_{max}(\vec x(t))\in P$ and $\vec w(t)$ is non-negative, $\vec v_{max}(\vec x(t))$ corresponds to a base of matroid $\mathcal{M}$. In other words, we find a $I^*(t)\in\mathcal{I}$ such that
\begin{equation*}
I^*(t)\in\arg\max_{I(t)\in\mathcal{I}}(w(t)\cdot \vec 1_{I(t)})
\end{equation*}
where $I^*(t)$ is the maximum-weight independent set at time step $t$, which can be obtained by hill-climbing strategy. Then, $t$ increases discretely by $\Delta t$ in each step. Until getting the vector $\vec x(1)$, the algorithm terminates.

After that, we have obtain a fractional vector returned by discrete continuous greedy. Then, we take the fractional solution $\vec x(1)$ and apply randomized rounding technique: partitioning node $j$ to community $i$ with the probability $x_{ij}(1)$ independently and guaranteeing that each node can just belong to one community at most, i.e. $x_{ij}=1$ with the probability $x_{ij}(1)$ and $x_{ij}=0$ with the probability $1-x_{ij}(1)$, and for any  $j\in V$, $\sum_{i\in M}x_{ij}\leq 1$.

\section{Performance Analysis}\label{Performance}
In this section, we prove that the returned vector by Algorithm \ref{alg1} is an approximate solution of our problem, Equation \ref{equation2}. Before we get the final approximation ratio, we need to prove the following lemma.
\begin{lemma}
	Suppose there exists $\vec {x}^*\in P$ such that $OPT=\hat{f}(\vec {x}^*)$, where $OPT$ is the optimal solution for the problem, Equation \ref{equation88}. For any $\vec x\in R^n$, we have  $\vec x^*\nabla \hat{f}(\vec {x})\geq OPT-\hat{f}(\vec {x})$.
	\label{lemma5}
\end{lemma}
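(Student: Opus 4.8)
The plan is to exploit the two structural facts already established about the relaxed objective: by the Theorem, $\hat{f}$ is concave on the domain, and, as noted right after Definition \ref{definition3} and confirmed by Lemma \ref{lemma4}, $\hat{f}$ is positively homogeneous of degree one. Concavity supplies an upper tangent (supergradient) inequality, while homogeneity supplies Euler's identity; combining the two makes the $\hat{f}(\vec{x})$ terms cancel and yields the bound almost immediately.

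Concretely, first I would record Euler's identity in the form $\vec{x}\cdot\nabla\hat{f}(\vec{x})=\hat{f}(\vec{x})$. This is not an extra hypothesis but a direct consequence of Lemma \ref{lemma4}: on any region where the sort order of the coordinates of $\vec{x}$ is fixed, the sets $S_i$ are constant, so $\hat{f}(\vec{x})=\sum_i\big(f(S_i)-f(S_{i-1})\big)x_i$ is linear with constant gradient whose $i$-th entry is $f(S_i)-f(S_{i-1})$, whence $\vec{x}\cdot\nabla\hat{f}(\vec{x})=\sum_i(f(S_i)-f(S_{i-1}))x_i=\hat{f}(\vec{x})$ exactly. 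Next I would invoke concavity in its first-order (supergradient) form: for all $\vec{x},\vec{y}$ in the domain, $\hat{f}(\vec{y})\leq\hat{f}(\vec{x})+\nabla\hat{f}(\vec{x})\cdot(\vec{y}-\vec{x})$. Setting $\vec{y}=\vec{x}^{*}$ gives $OPT=\hat{f}(\vec{x}^{*})\leq\hat{f}(\vec{x})+\vec{x}^{*}\cdot\nabla\hat{f}(\vec{x})-\vec{x}\cdot\nabla\hat{f}(\vec{x})$. Substituting Euler's identity cancels $\hat{f}(\vec{x})$ against $\vec{x}\cdot\nabla\hat{f}(\vec{x})$, leaving $\vec{x}^{*}\cdot\nabla\hat{f}(\vec{x})\geq OPT$. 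Finally, since $\hat{f}$ is monotone with $\hat{f}(\vec{0})=f(\emptyset)=0$, we have $\hat{f}(\vec{x})\geq0$, so $OPT\geq OPT-\hat{f}(\vec{x})$ and the claimed inequality $\vec{x}^{*}\cdot\nabla\hat{f}(\vec{x})\geq OPT-\hat{f}(\vec{x})$ follows, in fact with the slack term $\hat{f}(\vec{x})$ to spare.

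The main obstacle is purely technical: the Lov{$\acute{a}$}sz extension is only piecewise linear, so $\nabla\hat{f}$ is not classically defined at coordinate ties. I would therefore have to justify that the coordinate derivatives produced by Lemma \ref{lemma4} form a genuine supergradient of the concave function $\hat{f}$ at $\vec{x}$, so that the first-order concavity inequality is legitimately applicable there; within each fixed-order region this is automatic since $\hat{f}$ is linear, and at boundaries one uses a consistent tie-breaking order. I would also double-check the sign conventions, namely that ``concave'' yields the upper-tangent inequality and that monotonicity yields $\hat{f}(\vec{x})\geq0$, to be certain the cancellation lands the inequality on the asserted side rather than its reverse.
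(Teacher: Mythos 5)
Your proof is correct, but it takes a genuinely different route from the paper's. The paper argues directionally: it sets $\vec d=(\vec x^*-\vec x)\vee\vec 0$, applies the concavity (supergradient) inequality along $\vec d$ to get $\hat{f}(\vec x+\vec d)-\hat{f}(\vec x)\leq \vec d\cdot\nabla\hat{f}(\vec x)$, then lower-bounds the left side by $OPT-\hat{f}(\vec x)$ via monotonicity (since $\vec x+\vec d=\vec x\vee\vec x^*\geq\vec x^*$) and upper-bounds the right side by $\vec x^*\cdot\nabla\hat{f}(\vec x)$ using $\vec d\leq\vec x^*$ and $\nabla\hat{f}(\vec x)\geq\vec 0$. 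You instead apply the supergradient inequality at $\vec y=\vec x^*$ and cancel $\hat{f}(\vec x)$ against $\vec x\cdot\nabla\hat{f}(\vec x)$ via Euler's identity, which is legitimate exactly as you argue: by Lemma \ref{lemma4} the extension is linear on each fixed-order cone with $\vec x\cdot\nabla\hat{f}(\vec x)=\hat{f}(\vec x)$, and at ties the tie-broken gradient is the gradient of an affine piece that is active at $\vec x$, hence a supergradient of the concave $\hat{f}$. Your route buys a strictly stronger conclusion, $\vec x^*\cdot\nabla\hat{f}(\vec x)\geq OPT$; pushed through the ODE argument of Theorem \ref{theorem1} it would give $\hat{f}(\vec x(1))\geq OPT$ rather than $(1-1/e)OPT$, revealing that the $1-1/e$ loss is an artifact of the analysis template rather than intrinsic to this concave, homogeneous relaxation. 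What the paper's route buys is generality: it never uses positive homogeneity, so it is the argument that survives for non-homogeneous extensions (e.g., the multilinear extension in the standard continuous-greedy framework that this proof is modeled on). One shared caveat, which you partly flag: your last step $\hat{f}(\vec x)\geq 0$ requires $\vec x\geq\vec 0$ (monotone marginals multiplied by negative coordinates can make $\hat{f}(\vec x)<0$), just as the paper's claim $\vec d\leq\vec x^*$ does; so both proofs really establish the lemma for $\vec x\in[0,1]^{m\times n}$ rather than for all of $R^n$ as stated, which is all that Theorem \ref{theorem1} needs.
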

\begin{proof}
	We know that $\hat{f}(\vec {x})$ is concave in all directions, but we just need to consider the non-negative direction as $\vec {x}^*\in P$.  Let us consider a direction $\vec d=(\vec x^*-\vec x)\vee\vec 0$, where $x\vee y=max(x,y)$. Assume that $\vec d\geq \vec0$ is the moving direction of $\hat{f}(\vec {x})$. Hence, based on the property of concave function, we  have the conclusion: $\hat{f}(\vec {x}+\vec d)-\hat{f}(\vec {x})\leq \vec d\nabla \hat{f}(\vec {x})$.
	
	We discuss the problem in two cases:
	
	1. $\vec x+\vec d=\vec x^*\vee \vec x \geq \vec x^*$. As $\hat{f}(\vec {x})$ is non-decreasing, we can get $\hat{f}(\vec {x}+\vec d)\geq \hat{f}({\vec x^*})$. So $\hat{f}(\vec {x}+\vec d)-\hat{f}(\vec {x})\geq \hat{f}({\vec x^*})-\hat{f}(\vec {x})$.
	
	2. When $\vec x^*-\vec x\geq\vec 0$, so $\vec d=(\vec x^*-\vec x)$, $\vec d\leq \vec x^*$ and as $\nabla \hat{f}(\vec {x})\geq\vec0$, we have $\vec d\nabla \hat{f}(\vec {x})\leq \vec x^*\nabla\hat{f}(\vec {x})$.
	
	Combining the above two cases, we  can obtain that:
	\begin{equation*}
	\vec x^*\nabla \hat{f}(\vec {x})\geq OPT- \hat{f}(\vec {x})
	\end{equation*}
	
\end{proof}
\begin{theorem}
	When $\hat{f}({x})$ is the Lov{$\acute{a}$}sz extension of the influence propagation $f$ for IMCPP, $\vec x(1)$ returned by Algorithm \ref{alg1} satisfies:
	$\vec x(1)\in P$ and $\hat{f}({\vec x(1)})\geq (1-\frac{1}{e})OPT$
	\label{theorem1}
\end{theorem}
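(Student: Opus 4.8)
The plan is to prove the two assertions separately, first the feasibility $\vec x(1)\in P$ and then the approximation guarantee through a differential-inequality argument. For feasibility I would invoke Equation \ref{equation66}, which expresses $\vec x(1)=\int_0^1 \vec v_{max}(\vec x(\tau))\,d\tau$. By construction each direction $\vec v_{max}(\vec x(\tau))$ is a point of the matroid polytope $P=P(\mathcal{M})$, and since $P$ is convex, the integral average $\vec x(1)$ of points of $P$ over the unit interval again lies in $P$. In the discrete implementation this is even cleaner: each update adds $\vec1_{I^*(t)}\cdot\Delta t$, a scaled base of $\mathcal{M}$, so $\vec x(1)$ is an ordinary convex combination of the $1/\Delta t$ bases $\vec1_{I^*(t)}$ and therefore belongs to $P$ by the definition of the matroid polytope.

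For the ratio I would track the scalar $g(t):=\hat f(\vec x(t))$ along the trajectory. Using the chain rule together with the flow equation,
\begin{equation*}
\frac{d}{dt}\hat f(\vec x(t))=\nabla \hat f(\vec x(t))\cdot \frac{d\vec x(t)}{dt}=\nabla \hat f(\vec x(t))\cdot \vec v_{max}(\vec x(t)).
\end{equation*}
Since $\vec v_{max}(\vec x(t))$ is by definition the maximizer of $\vec v\cdot\nabla\hat f(\vec x(t))$ over $\vec v\in P$, and since $\vec x^*\in P$ is itself a feasible competitor, this derivative is at least $\vec x^*\cdot\nabla\hat f(\vec x(t))$. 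Applying Lemma \ref{lemma5} then yields the key differential inequality
\begin{equation*}
\frac{d}{dt}\hat f(\vec x(t))\ \geq\ \vec x^*\,\nabla\hat f(\vec x(t))\ \geq\ OPT-\hat f(\vec x(t)).
\end{equation*}

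With the initial value $g(0)=\hat f(\vec 0)=f(\emptyset)=0$, I would solve this by setting $h(t):=OPT-g(t)$, so that $h'(t)\le -h(t)$; multiplying by $e^{t}$ shows $e^{t}h(t)$ is nonincreasing, whence $h(t)\le h(0)e^{-t}=OPT\cdot e^{-t}$. Rearranging gives $g(t)=\hat f(\vec x(t))\ge(1-e^{-t})\,OPT$ for all $t\in[0,1]$, and evaluating at $t=1$ produces exactly $\hat f(\vec x(1))\ge(1-1/e)\,OPT$.

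The step I expect to be the main obstacle is justifying the chain-rule differentiation, because the Lov\'asz extension $\hat f$ is only piecewise linear: by Lemma \ref{lemma4} its gradient $\nabla\hat f(\vec x(t))$ is locally constant but jumps at the finitely many times $t$ where two coordinates of $\vec x(t)$ cross and the sorted order in Equation \ref{equation20} changes. Hence $g(t)$ is continuous and piecewise differentiable but not globally smooth, so the derivative above exists only almost everywhere. I would make this rigorous either by exploiting the concavity from Lemma \ref{lemma3} directly, replacing the derivative by the right-hand directional derivative which is well defined everywhere, or by partitioning $[0,1]$ into the finitely many subintervals on which the ordering is fixed, proving the inequality on each piece, and integrating. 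Both routes preserve the concavity and the positivity $\nabla\hat f\ge\vec 0$ on which Lemma \ref{lemma5} rests, so the differential-inequality conclusion survives the non-smoothness.
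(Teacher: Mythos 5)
Your proposal is correct and takes essentially the same route as the paper: $\vec x(1)\in P$ follows from convexity of $P$ applied to the integral (the paper uses a Riemann-sum limit), and then Lemma \ref{lemma5} together with the maximality of $\vec v_{max}$ yields the differential inequality $\frac{d}{dt}\hat{f}(\vec x(t))\geq OPT-\hat{f}(\vec x(t))$, solved by an integrating-factor argument to give $(1-1/e)OPT$. Your write-up is in fact slightly more careful than the paper's, which differentiates $\hat{f}(\vec x(t))$ without addressing the piecewise-linearity of the Lov\'asz extension and solves the inequality via the equivalent explicit formula $\hat{f}(\vec x(t))=\int_0^t e^{(x-t)}g(x)\,dx$ rather than your monotonicity argument for $e^{t}\bigl(OPT-\hat{f}(\vec x(t))\bigr)$.
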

\begin{proof}
	Based on the Equation \ref{equation66}, we can calculate $\vec x(1)$ as follows:
	\begin{equation*}
	\vec x(1)=\int_{0}^{1}\vec x^\prime(t)dt=\int_{0}^{1}\vec v_{max}(\vec x(t))dt
	\end{equation*}
	Since $t$ ranges from 0 to 1, we can use the theorem of the limit of Riemann sum to calculate the limit of the integral of $\vec x(1)$.
	\begin{equation*}
	\vec x(1)=\lim\limits_{n\rightarrow\infty}\frac{1}{n}\sum_{i=1}^{n}\vec v_
	{max} \vec x(\frac{i}{n})
	\end{equation*}
	By the definition of $\vec v_{max} (\vec x)$, we know that $\vec v_{max} (\vec x)\in P$ for any $\vec x$. The term inside the limit is a convex combination of vectors which belongs to $P$. Since $P$ is a closed convex set, the limit of $\vec x(1)$ is in $P$. This proves $\vec x(1)\in P$.
	
	From Lemma 	\ref{lemma5}, we know that there exists $\vec v\in P$ such that $\vec v({x})\nabla \hat{f}(\vec{x})\geq OPT-\hat{f}(\vec{x})$. When $\vec v=\vec v_{max}$, we also can get $\vec v_{max}({\vec x})\nabla \hat{f}({\vec x})\geq OPT-\hat{f}({\vec x})$.
	
	We get further $\frac{d\hat{f}({\vec x(t)})}{dt}\geq OPT-\hat{f}({\vec x(t)})$. We define $g(t)=\frac{d\hat{f}({\vec x}(t))}{dt}+\hat{f}({\vec x}(t))\geq OPT$.  Then we can get $\hat{f}({\vec x}(t))=\int_0^te^{(x-t)}g(x)dx$. So $\hat{f}({\vec x}(1))=\int_0^1e^{(x-1)}g(x)dx\geq\int_0^1e^{(x-1)}OPTdx=OPT[e^{x-1}]^1_0=OPT(1-1/e)$. This proves $\hat{f}({\vec x(1)})\geq (1-\frac{1}{e})OPT$.
\end{proof}

In the second stage of Algorithm \ref{alg1}, we use random rounding to convert the fractional solution to integer solution. As we know that the relationship between the result of random rounding $\hat{f}_R({\vec x})$ and the continuous solution $\hat{f}({\vec x})$  is  $\hat{f}_R({\vec x})\geq \hat{f}({\vec x})$. So the final result of Algorithm \ref{alg1} we present compared with the optimal solution is $\hat{f}_R({\vec x})\geq (1-\frac{1}{e})OPT$.

\begin{theorem}
	Algorithm \ref{alg1} returns a $(1-\frac{1}{e})$-approximation (in expectation) for the problem, Equation  \ref{equation88}.
	\label{theorem2}
\end{theorem}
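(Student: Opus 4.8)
The plan is to obtain Theorem \ref{theorem2} as a direct consequence of the continuous guarantee in Theorem \ref{theorem1} together with the randomized rounding step described immediately after it, by splitting the argument into a feasibility claim and an expected-value claim. The fractional vector $\vec x(1)$ is already handled by Theorem \ref{theorem1}; all that remains is to show that converting it to an integral partition neither violates the constraint nor loses more than the rounding factor in expectation.

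First I would dispose of feasibility. The rounding scheme assigns each node $j\in V$ to community $i$ with probability $x_{ij}(1)$ and leaves it unassigned with the remaining probability $1-\sum_{i\in M}x_{ij}(1)$, which is non-negative precisely because $\vec x(1)\in P(\mathcal{M})$ by Theorem \ref{theorem1}. Since these are the outcome probabilities of a single categorical draw per node, each node lands in at most one community, so the integral solution $A$ satisfies $|\{i:(i,j)\in A\}|\leq 1$ for every $j$ deterministically, i.e. $A\in\mathcal{I}$. Hence every realization of the rounding is feasible for Equation \ref{equation2}, independently of the random choices, and no probabilistic repair step is needed.

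Next I would handle the value. By Theorem \ref{theorem1} the fractional vector obeys $\hat{f}(\vec x(1))\geq(1-\tfrac{1}{e})OPT$. Writing $\hat{f}_R(\vec x(1))=\mathbb{E}[f(A)]$ for the expected influence of the rounded partition $A$, the stated rounding relation $\hat{f}_R(\vec x)\geq\hat{f}(\vec x)$ gives $\mathbb{E}[f(A)]\geq\hat{f}(\vec x(1))\geq(1-\tfrac{1}{e})OPT$. Relative to $OPT=\max\{\hat{f}(\vec x):\vec x\in P(\mathcal{M})\}$, the optimum of Equation \ref{equation88}, this is exactly the claimed $(1-\tfrac{1}{e})$-approximation in expectation, completing the theorem.

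The hard part will be justifying the rounding inequality $\hat{f}_R(\vec x)\geq\hat{f}(\vec x)$ rather than merely quoting it. I would attempt it by writing both quantities as expectations of $f$ over random subsets of $U$ that share the same element-inclusion marginals $x_{ij}$: the Lovász extension $\hat{f}(\vec x)$ corresponds to the threshold coupling $T_\lambda=\{(i,j):x_{ij}>\lambda\}$ driven by a single uniform $\lambda$, whereas $\hat{f}_R(\vec x)$ corresponds to rounding each node independently. The comparison between these two distributions is exactly where supermodularity (Lemma \ref{lemma2}) must enter, and getting the \emph{direction} of the inequality correct for a supermodular (as opposed to submodular) objective is the delicate point; I would try to verify it by a coupling, or by an induction that replaces the correlated threshold choices with independent ones one node at a time while tracking the sign of each exchange through the supermodular inequality. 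This single inequality, and not the feasibility check or the assembly of the final bound, is where essentially all of the risk in the proof is concentrated, so I would spend most of the effort making that step airtight.
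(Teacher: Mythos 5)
Your proof has the same architecture as the paper's own justification: feasibility of the per-node categorical rounding, Theorem \ref{theorem1} for the fractional vector, and the rounding inequality $\hat{f}_R(\vec x)\geq\hat{f}(\vec x)$ (which the paper simply asserts with ``as we know''). The feasibility half and the assembly of the final bound are fine. The genuine gap is exactly the step you flagged as risky, and it cannot be closed: the inequality $\hat{f}_R(\vec x)\geq\hat{f}(\vec x)$ is not merely unproven, it goes in the \emph{wrong direction} for a supermodular objective. For a supermodular set function, the Lov\'asz extension coincides with the \emph{concave closure}, i.e.\ $\hat{f}(\vec x)=\max\{\mathbb{E}_{A\sim\mathcal{D}}[f(A)]:\mathcal{D}\ \mathrm{has\ marginals}\ \vec x\}$ (apply the standard fact that the Lov\'asz extension of a submodular function equals its convex closure to $-f$). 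The independent per-node rounding is one particular distribution with marginals $x_{ij}(1)$, so it satisfies $\hat{f}_R(\vec x)=\mathbb{E}[f(A)]\leq\hat{f}(\vec x)$, the reverse of what you need. Your proposed coupling or exchange induction cannot repair this: correlation \emph{helps} a supermodular objective, and independent rounding destroys precisely the correlations that the single-threshold distribution defining $\hat{f}$ exploits.

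A concrete counterexample inside this very model: take $V=\{u,v\}$ with a single edge of weight $1$ between them, $m=2$ communities, and $x_{1u}=x_{1v}=x_{2u}=x_{2v}=1/2$, which lies in $P(\mathcal{M})$. The threshold sets $\{(i,j):x_{ij}>\lambda\}$ equal all of $U$ for $\lambda<1/2$ and $\emptyset$ otherwise, so $\hat{f}(\vec x)=\tfrac{1}{2}\cdot 2\sigma(\{u,v\})=\sigma(\{u,v\})$; independent rounding places $u$ and $v$ in the same community only with probability $1/2$, and singleton communities contribute nothing, so $\hat{f}_R(\vec x)=\tfrac{1}{2}\sigma(\{u,v\})<\hat{f}(\vec x)$. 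Consequently your chain $\mathbb{E}[f(A)]\geq\hat{f}(\vec x(1))\geq(1-1/e)OPT$ breaks at its first link (note this is a flaw you have inherited from, not introduced into, the paper's argument). To salvage an integral guarantee one would have to either bound $\mathbb{E}[f(A)]$ directly against the \emph{integral} optimum, or round with correlations (e.g.\ a single shared threshold) and then repair the resulting multi-assignments of nodes to communities, losing a quantifiable factor; neither step appears in your write-up or in the paper. The only reading under which the theorem follows immediately is the literal one in which Equation \ref{equation88} (the fractional problem) is the benchmark and no rounding is performed, but then the statement is just a restatement of Theorem \ref{theorem1} and the ``in expectation'' qualifier is vacuous.
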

Theorem	\ref{theorem1} and the proof above imply the result of Theorem \ref{theorem2}.

Here we will discuss the complexity of the proposed algorithm. The complexity is relatively high for large scale social networks.
\begin{theorem}
	The complexity of discrete continuous greedy algorithm is upper bounded by $O((\log(mn) + mn|E|r)/\Delta t)$.
	\label{theorem3}
\end{theorem}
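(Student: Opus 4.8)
The plan is to bound the total running time as the product of two factors: the number of outer iterations of the discrete continuous greedy process, and the worst-case cost of a single iteration. First I would observe that the time variable $t$ advances from $0$ to $1$ in increments of $\Delta t$, so the outer loop executes exactly $1/\Delta t$ times. This immediately accounts for the $1/\Delta t$ factor in the claimed bound, and it reduces the problem to showing that the work done in one iteration is $O(\log(mn) + mn|E|r)$.

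Next I would decompose a single iteration into the three operations listed in the discrete implementation. The first is sorting the vector $\vec{x}(t)$, which has $m \times n = mn$ coordinates, into the order $\vec{x}'(t)$ of Equation \ref{equation20}; a comparison-based sort contributes the logarithmic factor recorded as $\log(mn)$ in the bound, and in any event this term is dominated by the weight-computation cost below. The second is forming the weight vector $\vec{w}(t)$: by Equation \ref{equation9} each coordinate $w_{ij}(t) = f(S'_{lk}(t)) - f(S'_{l(k-1)}(t))$ is a single marginal gain of the influence function $f$, and there are $mn$ such coordinates. The third is extracting the maximum-weight independent set $I^*(t)$ of the partition matroid $\mathcal{M} = (U,\mathcal{I})$; since the only constraint is that no two chosen pairs share a node $j \in V$, the hill-climbing rule reduces to selecting, within each group $M \times \{j\}$, the incident pair of largest nonnegative weight, which costs $O(mn)$ overall. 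I would then note that the marginal-gain computations dominate, so that the per-iteration cost is governed by the second operation.

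The central estimate, and the step I expect to be the main obstacle, is bounding the cost of a single marginal-gain evaluation $f(S'_{lk}) - f(S'_{l(k-1)})$ by $O(|E|r)$. Here one must invoke the fact (cited for the LT model in the introduction) that influence spread under the LT model is computable exactly in polynomial time: evaluating the conditional gain of assigning one node to a community amounts to propagating live-edge contributions through the graph, whose cost scales with the number of edges $|E|$ and with the parameter $r$ governing the propagation, which I would identify as the rank of $\mathcal{M}$ (equivalently a bound on the number of propagation rounds or tracked path lengths). Making precise exactly what $r$ denotes and why one influence marginal gain is computable in $O(|E|r)$ time is the delicate point; once it is pinned down, multiplying by the $mn$ coordinates yields $O(mn|E|r)$ for the weight-vector construction, which subsumes the $O(mn)$ cost of the matroid step.

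Finally I would combine the three per-iteration contributions into $O(\log(mn) + mn|E|r)$, with the sorting term surviving only as the logarithmic summand and the matroid-optimization term absorbed into $mn|E|r$, and then multiply by the $1/\Delta t$ iterations to obtain the stated bound $O((\log(mn) + mn|E|r)/\Delta t)$. The remaining bookkeeping after the single-gain estimate is routine; the crux of the argument is the justification of the $|E|r$ cost per influence evaluation under the LT model.
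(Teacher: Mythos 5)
Your outer structure matches the paper's proof exactly: $1/\Delta t$ iterations, each costing a sort plus $mn$ marginal-gain evaluations of $f$, with the weight computation dominating. You are even somewhat more careful than the paper on the secondary terms (the paper simply asserts the sort costs $O(\log(mn))$ and silently ignores the matroid-optimization step, whereas you note both are dominated by the $mn|E|r$ term). However, there is a genuine gap at the point you yourself flag as the crux: the meaning of $r$ and the justification of the $O(|E|r)$ cost per evaluation of $f$. Your proposed resolution is wrong. The parameter $r$ is not the rank of $\mathcal{M}$, nor a bound on propagation rounds or path lengths, and the argument does not invoke exact polynomial-time computability of LT influence spread. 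In the paper, $f(\cdot)$ is \emph{estimated by Monte Carlo simulation}: $r$ is the number of Monte Carlo runs (set to $500$ in the experiments section), each run is a single live-edge propagation over the graph costing $O(|E|)$, so one evaluation of $f$ costs $O(|E|r)$. With that reading the crux is immediate rather than delicate: $mn$ weight entries at $O(|E|r)$ each give $O(mn|E|r)$ per iteration, hence $O((\log(mn)+mn|E|r)/\Delta t)$ overall.

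The distinction matters beyond notation. Under your interpretation the claimed bound would be asserting something much stronger --- that each marginal gain of the LT influence function is computable \emph{exactly} in time $|E|$ times a combinatorial parameter --- which is not established anywhere in the paper (and exact LT spread computation is known to be hard in general); the introduction's remark about polynomial-time solvability refers to cited work on influence maximization, not to an exact evaluation routine used by this algorithm. So as written, your proof has a hole precisely where you predicted: the single-evaluation cost cannot be pinned down along the route you chose, while the paper's Monte Carlo reading closes it in one line.
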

\begin{proof}
	First, at step (2), we sort the elements in $\vec x(t)$ from maximum to minimum, there are total $m\times n$ elements, the complexity is $O(\log(mn))$. Then, we estimate the objective function $f(\cdot)$ by Monte Carlo simulations, the running time of $f(\{v\})$ given a node $v$ is $O(|E|r)$ where $r$ is the number of Monte Carlo simulations, where $|E|$ is the cardinality of the edge set $E$ of the social network. The average number of node in $S'_{lk}(t)$ is $mn/2$, thus, the total running time of step (2) is $O(\log(mn) + mn|E|r)$.
	
	The running time of Discretized continuous greedy is determined by its step (2), so we have its time complexity $O((\log(mn) + mn|E|r)/\Delta t)$
\end{proof}

 We can know that the  complexity is high from Theorem \ref{theorem3}, this results the poor scalability of the network.

\section{Experiments}\label{experiments}
We show the experimental results of the proposed algorithms is this section. Firstly, we give descriptions about the used datasets and parameter settings.
\subsection{Experimental Setup}
\textbf{Datasets}: Three datasets of different magnitude are used in our experiments. The first two datasets are from networkrepository.com, it is like an online network repository which includes diverse kinds of networks. The first dataset is a co-authorship network which is a co-authorship about scientists in the field of network theory and experiment. The dataset consists of nodes and edges which depict relationships among 379 users, it has 914 edges between nodes; the second dataset is a Wiki-vote network, namely Wikipedia who-votes-on-whom network. It shows the voting relationship among 914 users and the edges between users are 2914. The third dataset is from arXiv, it is called the NetHEPT, which is a co-authorship network in the ``High Energy Physics'' section. It is made by 15299 users and 31376 edges.

\textbf{Influence Model}: LT model is the influence spread model in our problem. We set the spread probability for each directed edge $e$ as $p(e)=1/d(i)$, where $d(i)$ is the in-degree of node $i$. This method of setting $p(e)$ has been widely adopted in some previous research\cite{yang2016continuous, wang2012scalable, tang2014influence}. The threshold that a node becomes active is generated randomly between 0 and 1.

\textbf{Comparison Methods}: To evaluate the effectiveness of the proposed algorithm, we compare the discrete continuous greedy algorithm with two baseline algorithms: random method, label propagation algorithm. Besides the Spit algorithm for Maximum K-Community Partition (SAMKCP) algorithm and Merge algorithm for Maximum K-Community Partition (MAMKCP) which are described in \cite{lu2014influence} are also used as the comparision algorithms.

\textbf{Random}: It randomly partitions nodes to different communities, which is a classical baseline algorithm.

\textbf{Label Propagation} \cite{garza2019community}: It is a classical disjoint community detection algorithm. Firstly, each node is given a unique label. Then these labels spread in the network. Each node updates the label adopted by most of its neighbors iteratively. At the end of the algorithm, connected nodes with the same label make up one community.

\textbf{SAMKCP}: All the nodes belong to one community at first, then they spits on one of the communities recursively, which is a heuristic algorithm.

\textbf{MAMKCP}:  Each node belongs to a community, then pairs of communities are merged recursively as a new community, which is also a heuristic algorithm.
\subsection{Result Analysis}
We have to extract sub-graph firstly at each step of the experiment so as to estimate the influence propagation of a community. The process of extracting sub-graph is that: Given the node sets of  communities, we go through all the edges. We add a edge to the set of edges of sub-graph when the two nodes of this edge are in the node set of this community. Then do simulations in the next steps. To estimate $f(\cdot)$, the number of Monte Carlo simulation is set as 500.

\begin{figure}[htbp]
	\centerline{\includegraphics[width=4.5cm, height=2.3cm]{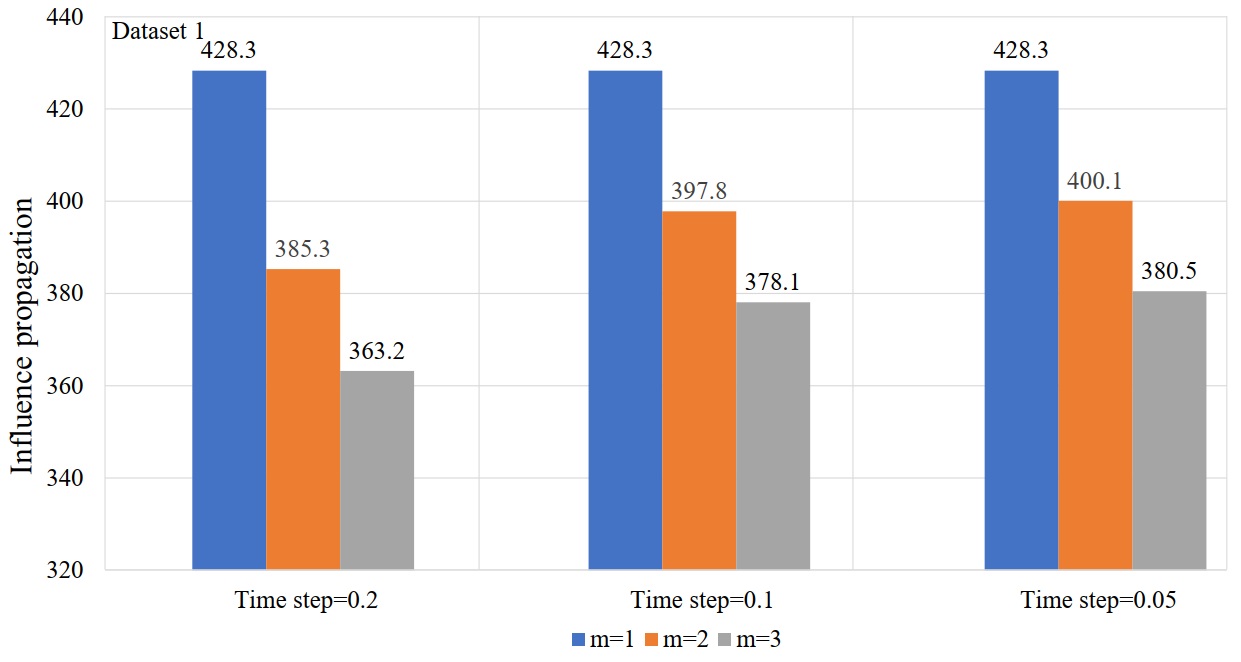}}
	\caption{Continuous results on dataset 1}\label{dataset1}
\end{figure}

\begin{figure}[htbp]
	\centerline{\includegraphics[width=4.5cm, height=2.3cm]{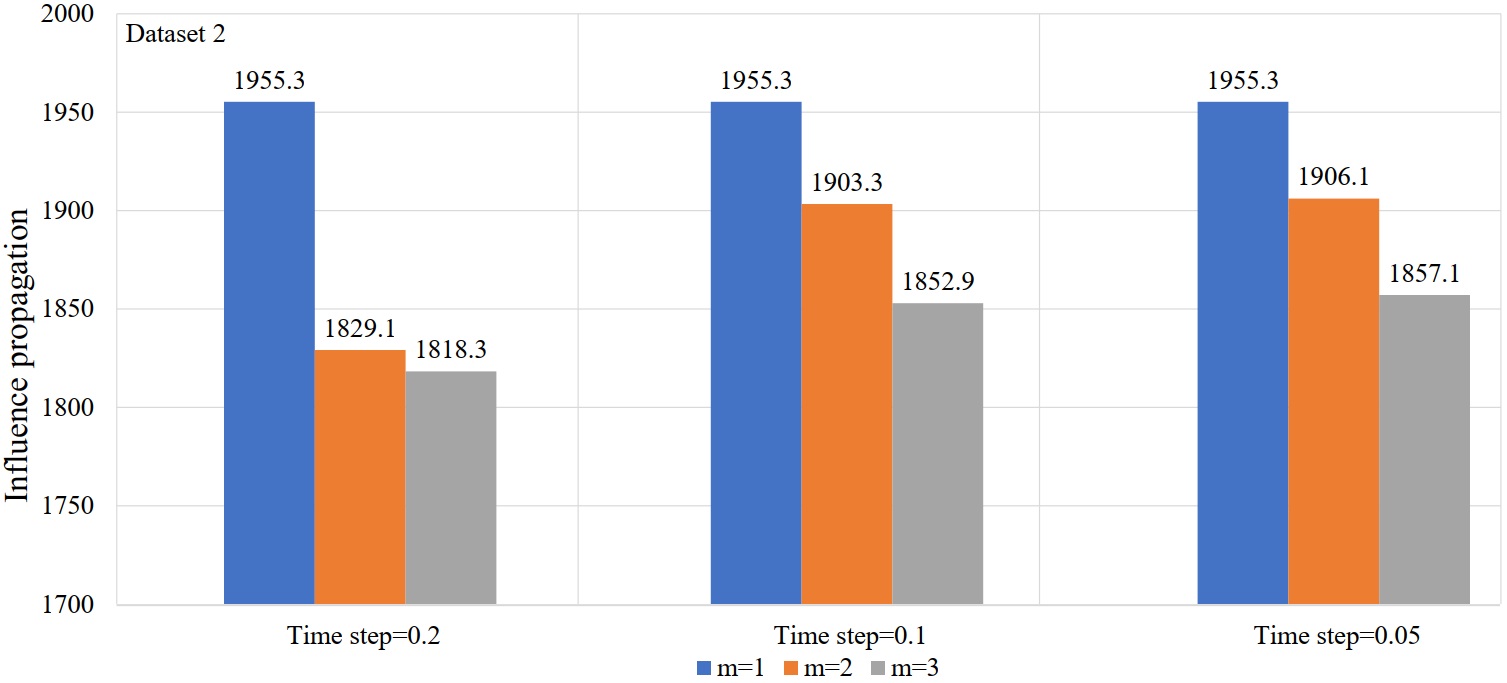}}
	\caption{Continuous results on dataset 2}\label{dataset2}
\end{figure}

\begin{figure}[htbp]
	\centerline{\includegraphics[width=4.5cm, height=2.2cm]{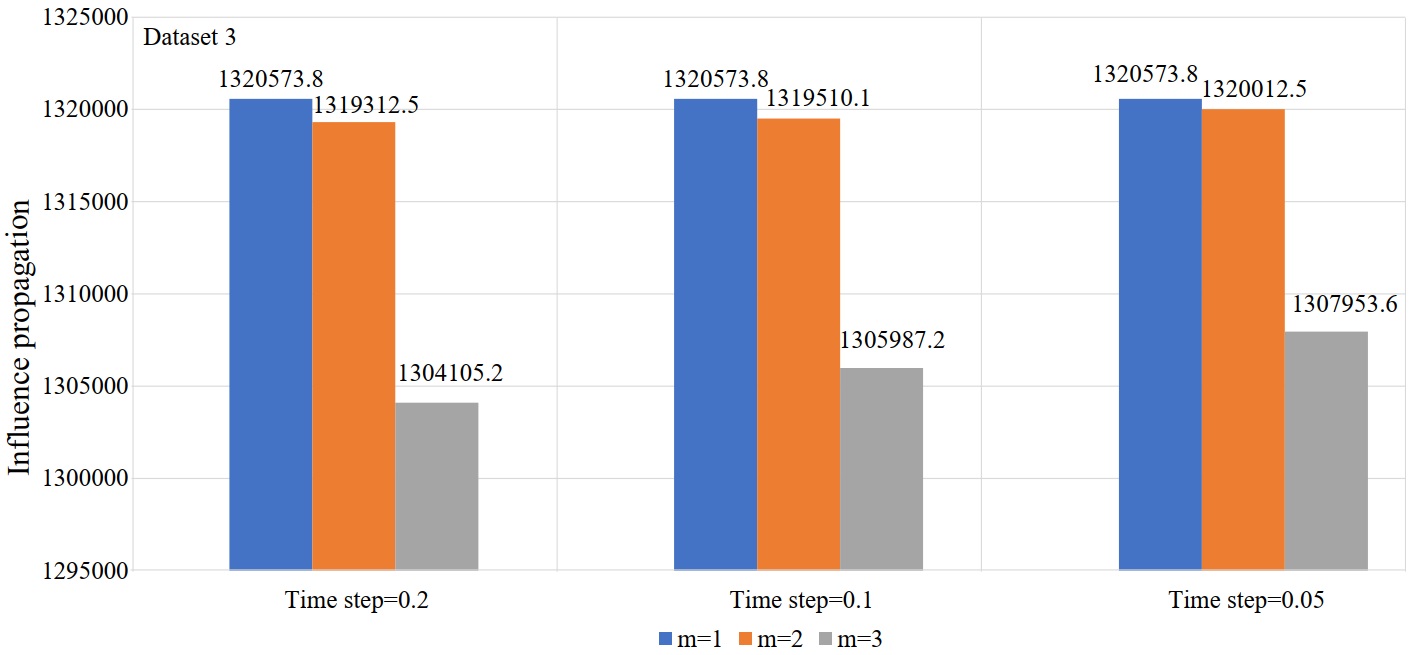}}
	\caption{Continuous results on dataset 3}\label{dataset3}
\end{figure}

\textbf{Varying the value of $m$ and the datasets}. The experiment result in Figure \ref{dataset1} is done on dataset 1, we show the variation of total influence propagation after the community partitioning by changing the value of time step $\Delta t$ when we partition the community to different numbers with the  method of Discrete continuous greedy. We can see from the bars that when $m=1$ the influence propagation does not change no matter what $\Delta t$ is since we do not need to partition community actually. When $m=2$ the influence propagation is always larger than that in $m=3$ whatever the time step $\Delta t$ is. When $m=1$, the influence propagation obtains the maximum value. So the larger the number of the community is, the smaller the total influence propagation is in the social network. But if we fix the number of community partition $m$, we can see that when the time step $\Delta t$ decreases from 0.2 to 0.05, the influence propagation is increasing from 385.3 to 400.1 when $m=2$ and increasing from 363.2 to 380.5 when $m=3$. Thus, we can get that the smaller the value of time step $\Delta t$ is, the larger the total influence propagation after community partitioning is, so smaller time step $\Delta t$ can make the community partition more accurate. We can also see from the Figure \ref{dataset1} that when the time step $\Delta t$ decreases from 0.2 to 0.1 and $m=2$, the increment of influence propagation is 12.5; when we just increase the  number of community partition to $m=3$, the increment of influence propagation increases to 14.9. But when the time step $\Delta t$ decreases from 0.1 to 0.05, the increment of influence propagation is 2.3 when $m=2$ and 2.4 when $m=3$, respectively. The increment is greatly deuced as the decreasing of $\Delta t$. When time step $\Delta t=0.1$ the influence propagation is already relatively stable. These show that the community partition is already fairly accurate when $\Delta t=0.1$ although a smaller the value of $\Delta t$ will get a more accurate result. The total influence propagation will become smaller if we try to partition the whole network into more communities. This phenomenon is because that it decreases the leaking out of the influence propagation between two communities after the community partitioning.

The experiment results in Figure  \ref{dataset2} and \ref{dataset3} are done on two larger dataset 2 and dataset 3.  we can find that the trend of the bar charts in Figure \ref{dataset2} and \ref{dataset3} are similar to the results in Figure \ref{dataset1}. This further confirms the correctness and validity of our results.

\begin{figure}[htbp]
	\centerline{\includegraphics[width=4.5cm, height=2.3cm]{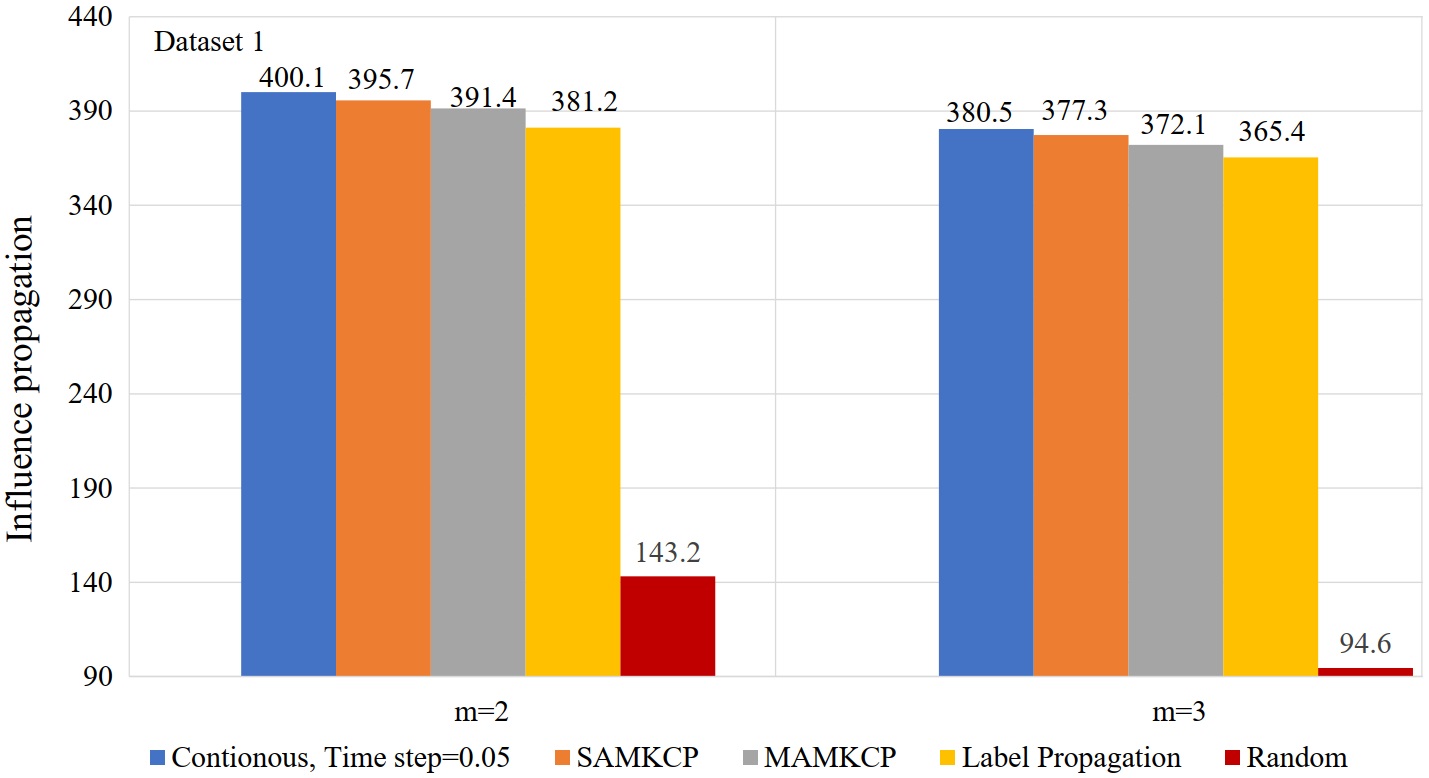}}
	\caption{Comparative results on dataset 1}\label{dataset11}
\end{figure}

\begin{figure}[htbp]
	\centerline{\includegraphics[width=4.5cm, height=2.3cm]{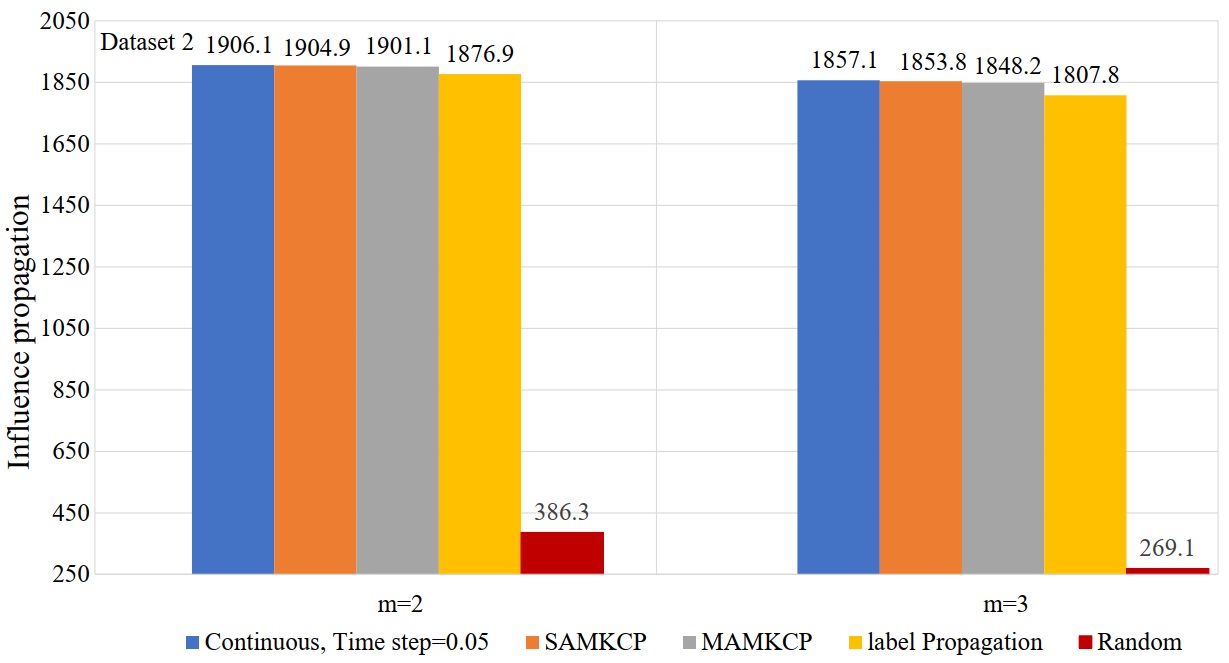}}
	\caption{Comparative results on dataset 2}\label{dataset22}
\end{figure}

\begin{figure}[htbp]
	\centerline{\includegraphics[width=4.5cm, height=2.3cm]{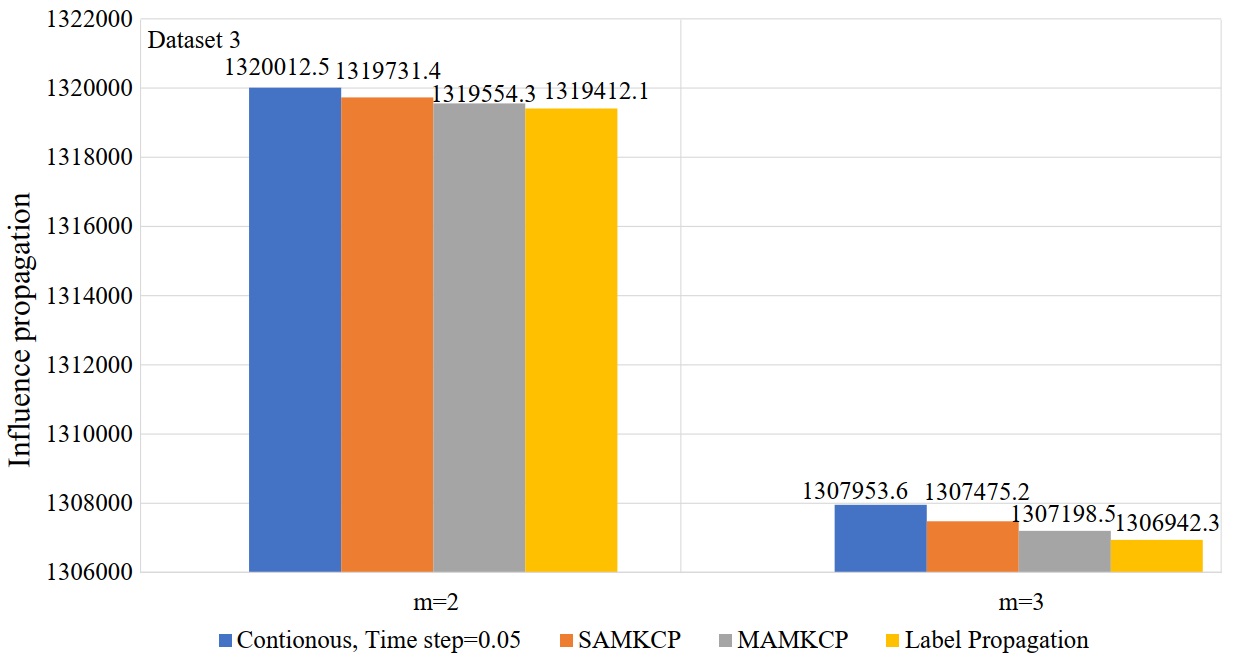}}
	\caption{Comparative results on dataset 3}\label{dataset33}
\end{figure}

\textbf{Comparison with other methods}. In order to show the effectiveness of our approach, we compare Discrete continuous greedy with Random community partition method, Label propagation method and other two methods SAMKCP, MAMKCP proposed by \cite{lu2014influence}. The comparison results are shown in Figure \ref{dataset11}, \ref{dataset22} and \ref{dataset33}. The random community partition method is intuitive. It simple partitions nodes randomly to each community. The y-axis shows the influence propagation when we partition different number of communities on dataset 1, 2 and 3 using our algorithm and random method, label propagation method,  SAMKCP, MAMKCP. The time step is set to 0.05 in proposed algorithm. It is obvious that the influence propagation with our proposed algorithm is much more than random method no matter $m=2$ or $m=3$ and no matter in dataset 1 or dataset 2. As in dataset 3, the result of random community partition method is much worse than the other four methods, we omit its result. The result of our method is also superior to SAMKCP, MAMKCP and Label propagation method, which shows that our algorithm trades time complexity for more accurate performance than SAMKCP, MAMKCP and Label propagation method. SAMKCP, MAMKCP and Label propagation methods have a low computational complexity but also have some loss in performance. These results illustrate that the continuous greedy method we proposed works well to maximize the influence propagation in each community.

\section{Conclusion}\label{conclusion}
We use the Lov{$\acute{a}$}sz extension theory to relax our target function of the Influence Maximization for Community Partition Problem (IMCPP) and introduce a partition matroid to the domain of the relaxed problem. Then we propose a continuous greedy algorithm and its discrete form to solve the problem. In order to convert the fractional solutions which are obtained by the two algorithms to integer solutions, we use the random rounding method to the results of the algorithms at the second stage.  We analyze the performance of our proposed algorithm and then an $1-1/e$ approximation ratio is got for the algorithm. Finally, we do simulations on three datasets which is from the real-world social networks to show the advantages of the proposed methods.

In the future, we would like to design an approximation algorithm for the influenced-based community partition problem in IC model with the sandwich method or DS decomposition method. And based on these work, we want to study a two stage algorithm for rumor blocking: doing the influence-based community partition at the first stage and blocking rumor at bridge-ends of the community which contains the rumor source at the second stage.
%\section*{Acknowledgment}
%This work is supported by the National Natural Science Foundation of China (No.61772385, No.61572370).
\bibliographystyle{IEEEtran}
\bibliography{ref}
\end{document}